\newtheorem{lemma}{Lemma}
\newtheorem{theorem}{Theorem}[section]
\begin{document}

\title{Two-particle scattering on general graphs}
\author{Luna L. Keller}
\affiliation{International Institute of Physics, Universidade Federal do Rio Grande do Norte, 59078-970, Natal, Brazil}
\author{Daniel J. Brod}
\affiliation{Instituto de F\'isica, Universidade Federal Fluminense, Niter\'oi, RJ, 24210-340, Brazil}

\begin{abstract}
Quantum walks in general graphs, or more specifically scattering on graphs, encompass enough complexity to perform universal quantum computation. Any given quantum circuit can be broken down into single- and two-qubit gates, which can then be translated into subgraphs --- gadgets --- that implement such unitaries on the logical qubits, simulated by particles traveling along a sparse graph. In this work, we start to develop a full theory of multi-particle scattering on graphs and give initial applications to build multi-particle gadgets with different properties.
\end{abstract}

\maketitle

\section{Introduction}

Continuous-time quantum walks (CTQWs) have long been used to devise quantum algorithms that outperform their classical counterparts. They were introduced for this purpose by Farhi in \cite{qcdectrees1} to devise an algorithm that solves some types of decision trees exponentially faster than any classical algorithm. In a similar fashion, Childs et al., presented in \cite{gluedtrees2} another exponential speedup, this time for searching on a structured graph. A particular case of CTQWs is scattering on graphs; already introduced in \cite{qcdectrees1}, they make use of the graph as a scattering center, with which incoming particles interact and then are transmitted in different directions. The work by Childs, Gosset and Webb \cite{multiparticleqw} shows that it is possible to simulate any $n-$qubit quantum circuit efficiently by scattering $n+1$ particles on a sparse graph, built with blocks called gadgets --- subgraphs that implement some quantum operation on the particles. Even though formally it is considered a multi-particle scattering process, the quantum gates in \cite{multiparticleqw} and other gadgets operate mostly in the regime where one particle is propagating far from any others, essentially performing a single-particle quantum walk. Interactions are required only for two-qubit controlled-phase gates, where two particles scatter off of each other on the simplest graph, a very long line, via a translation-invariant Hamiltonian. This (approximate) translation invariance implies conservation of momentum which, together with energy conservation, severely restricts the outcome of the scattering, as the momenta of the particles are individually preserved.

In this work, we solve the more general problem of two-particle scattering on general graphs. Inspired by scattering theory in continuous space, we employ the Lippmann-Schwinger formalism \cite{sakurai} to find the S-matrix for this scenario. Crucially, in contrast to previous works \cite{levinsonthm2,varbanov, multiparticleqw} that only compute the S-matrix for fixed particle momenta, our approach explicitly treats the S-matrix as an operator on the full Hilbert space, allowing us to describe momentum exchanges between the particles.

We also note that multi-particle scattering theory has broad applications, e.g.\ in condensed matter and particle physics, and its discrete-space counterpart has been much less explored, which suggests applications for our results beyond the field of quantum computing.

This work is organized as follows. In Sec.\ \ref{sec:2}, we solve the single-particle scattering problem using the Lippman-Schwinger formalism, which will also be employed for the two-particle case. In Sec. \ref{sec:3A} we review some technical details regarding scattering theory, and obtain the two-particle S-matrix in Sec.\ \ref{subsec:3B}. In Sec.\ \ref{sec:4} we describe some applications as consequences of the main result: in Sec.\ \ref{sec:4A} we study the scattering of one particle when another particle is bound to the graph, defining elastic and inelastic transmission probabilities, as well as ejection probabilities, with examples shown in Sec.\ \ref{sec:4B} for specific graphs. In Sec.\ \ref{sec:4C} we study scattering of two \textit{traveling} particles. In Sec.\ \ref{sec:4D} we define a two-particle cross section, a quantity that measures the amount of interaction during the scattering of two traveling particles. In Sec.\ \ref{sec:conclusion} we provide concluding remarks and discuss open questions.

\section{Single-particle scattering on general graphs} \label{sec:2}

Let us begin by solving the single-particle scattering problem on graphs; there are already solutions for this problem in the literature \cite{levinsonthm2}, which are ansatz-based. However, these methods do not work straightforwardly for two particles, which we will need to address formally with the well-known Lippman-Schwinger method from scattering theory. Thus, we apply the method first to a case with known solution to establish the main ideas.

The setting we consider consists of a continuous-time quantum walk on $N$ semi-infinite lines, or rails, onto which the particles can be scattered, attached to a finite graph $G$, playing the role of the scattering center, as pictured in Fig. \ref{fig:graph parametrized}. We consider, without loss of generality\footnote{If we want to consider a graph with two rails attached to one vertex, we can redefine the inner graph to include one more vertex from each rail.}, that each of the rails connects to a different vertex of $G$, let us say, vertices $\{1,2,\cdots,N\}=:\partial G$, and let us denote the $M$ internal vertices by $\{N+1,N+2,\cdots,N+M\}=:G^0$, in such a way that the set of vertices of $G$ is the disjoint union $V(G)=\partial G\cup G^0$. Using this ordering, we can write the adjacency matrix of the finite graph $G$ as
\begin{equation}
\begin{aligned}
\begin{bmatrix}
    A & B^\dagger \\
    B & D
\end{bmatrix},
\label{H matrix}
\end{aligned}
\end{equation}
where $A$ is $N\times N$, $B$ is $M\times N$ and $D$ is $M\times M$. 

\begin{figure}[h]
    \centering    \includegraphics[width=\textwidth]{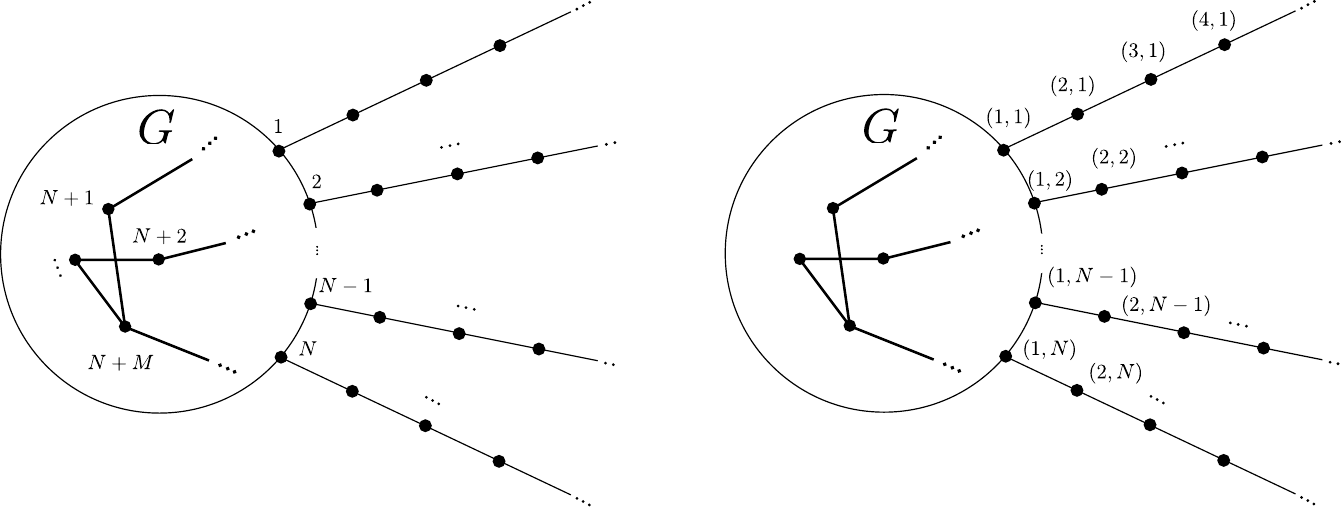}
    \caption{On the left we represent the labels for the vertices of the finite graph G. On the right we label the vertices on each rail. Note that the vertices connecting to rails have two labels, i.e., $n$ is equivalent to $(1,n)$ for $n\in\{1,2,\cdots,N\}$.}
    \label{fig:graph parametrized}
\end{figure}

We denote the position basis vectors on each vertex of $G$ by $\ket{u}$ for $u\in V(G)$. The position basis vectors on the rails are parametrized by the rail and their distance from the graph: $\{\ket{x;n}| x\in\mathbb{Z}, x\geq 1, n\in\{1,\cdots,N\} \}$, with $\ket{1;n}$ coinciding with $\ket{n}$. Denoting the set of edges by $E(G)$, the Hamiltonian for single-particle scattering on this graph is given by the adjacency matrix of the full graph:
\begin{equation}
\begin{aligned}
H=\sum_{n=1}^N \sum_{x\ge 1} \Big(\ket{x;n}\bra{x+1;n}+\ket{x+1;n}\bra{x;n}\Big) + \sum_{uv\in E(G)}\ket{u}\bra{v}=:H_{\text{free}}+H_G,
\end{aligned}
\end{equation}
where $H_G$ is the operator which, when restricted to $G$, is represented in position basis by the adjacency matrix of the graph $G$ \eqref{H matrix}, and $H_\text{free}$ is the corresponding operator on the $N$ rails. The choice of $H_G$ as the adjacency matrix of the graph, rather than a more general Hamiltonian, follows the literature on scattering quantum walks (e.g., \cite{multiparticleqw}), and it allows us to single out phenomena that are purely due to the specification of $G$ itself, since the Hamiltonian is defined in the same way at each vertex. However, the formalism described here would be applicable for more general Hamiltonians as well (e.g., with different weights for each edge).

Let us describe the eigenvectors of the free Hamiltonian $H_{\text{free}}$. Since it is a direct sum of operators acting on orthogonal subspaces, namely, each subspace corresponding to a single rail, and the subspace corresponding to $G$, its eigenvectors can be chosen inside each subspace. So we choose the following basis for scattering states\footnote{These are normalized scattering states on a semi-infinite line, with real and positive coefficient on the incoming wave.}:
\begin{equation}
\begin{aligned}
\braket{x;m|p^n}&=\dfrac{\delta_{mn}}{\sqrt{2\pi}}(e^{-ipx}-e^{ipx})=\dfrac{\delta_{mn}}{\sqrt{2\pi}}(z^{-x}-z^{x}) &\text{ for } x\ge 1 \text{ and } m=1,\cdots,N,
\end{aligned}
\end{equation}
where $z:=e^{ip}$, with eigenvalues $E_p=2\cos{p}=z+z^{-1}$. Let us define the amplitudes of $\ket{p^n}$ on the graph $G$ to be zero. Note that making $p\xrightarrow{}-p$ only changes the state by a global phase. The authors in \cite{levinsonthm2} show we only need to consider half of the interval $[-\pi,\pi)$ to form a complete basis (alongside possible bound states). We follow their convention and consider $p\in[-\pi,0)$ for incoming momenta. These states, together with the vectors $\ket{u}$ for $u\in G_0$ complete an eigenbasis for $H_{\text{free}}$. We will associate $k\in[0,\pi)$ with \emph{outgoing} momenta; although this is an arbitrary choice, it will make the results compatible with \cite{levinsonthm2}.

Given incoming and outgoing momenta, $p\in (-\pi,0)$ and $k\in (0,\pi)$ respectively, we are interested in computing the S-matrix, defined by \cite{sakurai,taylor}:
\begin{equation}
\begin{aligned}
S_{k,p}^{m,n}=\braket{{k^m}^-|{p^n}^+},
\end{aligned}
\end{equation}
where $\ket{{p^n}^\pm}$ are scattering states, i.e., eigenstates of the interacting Hamiltonian which converge, in the asymptotic past ($+$) or future ($-$), to the free-particle states  $\ket{p^n}$. In other words, $S_{k,p}^{m,n}$ is the probability amplitude of measuring the state $\ket{k^m}$ in the distant future (an outgoing particle with momentum $k$ on the rail $m$) given that we prepared the state $\ket{p^n}$ in the distant past (an incoming particle with momentum $p$ on the rail $n$). A more precise formulation can be made by considering wavepackets; see \cite{taylor}. These states can be found via Lippmann-Schwinger equation; it relates the scattering states in the presence of an interaction to the free-particle scattering states:
\begin{equation}
\begin{aligned}
\ket{{p^n}^\pm}=\ket{{p^n}} + (E-H^0\pm i\epsilon)^{-1}V\ket{{p^n}^\pm},
\label{singlelipp}
\end{aligned}
\end{equation}
where $V$ is the interaction considered in addition to a free Hamiltonian $H_0$ and we take the limit $\epsilon\rightarrow 0^+$ at the end of the computation. For the single-particle scattering we consider $H_0\equiv H_\text{free}$ and $V\equiv H_G$. Based on this equation, it is possible to arrive at a simpler expression for the S-matrix:
\begin{equation} \label{eq:singleparticleS}
\begin{aligned}
S_{k,p}^{m,n}=-\delta_{mn}\delta(k+p)-2\pi i\delta(E_k-E_p)\bra{k^m}V\ket{{p^n}^+}.
\end{aligned}
\end{equation}
Since $p\in(-\pi,0)$ and $k\in(0,\pi)$, we can write $\delta(E_k-E_p)=\delta(2\cos{k}-2\cos{p})=\dfrac{\delta(k+p)}{|2\sin{p}|}=\dfrac{\delta(k+p)}{i(z-z^{-1})}$, which is a manifestation of the fact that single-particle energy conservation implies conservation of momentum as well. As a consequence, the S-matrix has a block diagonal form with respect to the momentum variables:
\begin{equation}
\begin{aligned}
S_{k,p}^{m,n}=\delta(k+p)\left(-\delta_{mn}+\dfrac{2\pi}{z-z^{-1}}\bra{p^m}V\ket{{p^n}^+}\right)=:\delta(k+p)\mathbb{S}_{mn}(z),
\end{aligned}
\end{equation}
where we defined, for each fixed $z$, the $N\times N$ S-matrix  $\mathbb{S}(z)$. The amplitudes of $\ket{p^{n+}}$ on the rails can now be written as
\begin{equation}
\begin{aligned}
\braket{x;m|p^{n+}}&=\dfrac{1}{\sqrt{2\pi}}(\delta_{mn}e^{-ipx}+\mathbb{S}_{mn}e^{ipx}) &\text{ for } x\ge 1 \text{ and } m=1,\cdots,N.
\label{singlefree}
\end{aligned}
\end{equation}

In appendix \ref{apx:A} we prove the above equation, compute the S-matrix and the amplitudes of the scattering states for inner vertices $\braket{u|p^{n+}}=:\frac{1}{\sqrt{2\pi}}\Psi_{un}$. Defining $Q(z):=\mathds{1}-zA-zB^\dagger(z+z^{-1}-D)^{-1}B$, the full solution to single-particle scattering is summarized in the following matrices:
\begin{equation}
\begin{aligned}
\mathbb{S}(z)&=-Q(z^{-1})Q(z)^{-1},\\
\Psi(z)&=(z+z^{-1}-D)^{-1}B(z^{-1}\mathds{1}+z\mathbb{S}(z)).
\end{aligned}
\end{equation}
The matrices $\Psi$ and $\mathbb{S}$ give a complete description of the single-particle scattering states. However, these states do not form a basis for the full Hilbert space; together with the non-normalized states defined above, there are also normalizable states, called bound states (which can be further divided into confined and evanescent bound states, with slightly different physical properties). They will also play a role in two-particle scattering, and more details are found in Appendix \ref{apx:B}.

\section{Main result} \label{sec3}

In this section we obtain our main result: a description of the S-matrix of two interacting particles scattering on a general graph. However, before that, we will take a detour on formal scattering theory to find the correct domain of the S-matrix. It turns out that there can be no scattering of two traveling particles resulting in two (single-particle) bound states to the graph, or the time-reversed process. Intuitively, two particles initialized in \textit{single-particle} bound states will interact and will not both remain bound to the graph asymptotically, unless their initial state coincides with a \textit{two-particle} bound state. If this intuition is convincing enough, feel free to skip to Sec.\ \ref{subsec:3B}.

\subsection{A little detour on formal scattering theory}
\label{sec:3A}

A quantum scattering problem involves a particle that can move freely in some region and a scattering center with which the particle interacts. The free particle is described by some Hamiltonian $H_0$, which may be the Laplacian for particles in $\mathbb{R}^3$ or, in our case, the adjacency matrix of the underlying graph for particles restricted to vertices of such graph. The scattering center may be  a heavy particle which produces a localized potential well which may redirect the path of the scattered particle, connecting two different free asymptotes or free states; let us call the interacting Hamiltonian $H$. In the previous section, the role of the scattering center was fulfilled by the scattering graph $G$. We denote the free Hamiltonian by $H_0$ and the scattering center is represented by another operator $V$, both acting on a Hilbert space $\mathcal{H}$.

Let $U(t)$ and $U_0(t)$ represent the unitary evolution operator for $H$ and $H_0$ respectively, and let $\ket{\psi}$ describe the state of the particle at some specific time during the scattering. Let $U(t)\ket{\psi}$ describe a scattering experiment: we expect that in the distant past, the particle was far from the scattering center, and so behaved as a free particle, in the following sense
\begin{equation}
\begin{aligned}
\lim_{t\rightarrow -\infty}||U(t)\ket{\psi}-U_0(t)\ket{\psi_{\text{in}}}||= 0
\label{inasymptote}
\end{aligned}
\end{equation}
for some in-asymptote $\ket{\psi_{\text{in}}}$. Analogously, in the distant future we expect that the particle is also asymptotically free, meaning
\begin{equation}
\begin{aligned}
\lim_{t\rightarrow \infty}||U(t)\ket{\psi}-U_0(t)\ket{\psi_{\text{out}}}||= 0
\label{outasymptote}
\end{aligned}
\end{equation}
for some out-asymptote $\ket{\psi_{\text{out}}}$. This description raises several questions about the consistency of this problem, which in general are very hard to prove. Let us look at the first question.

\textbf{The asymptotic condition}: Given a state in $\mathcal{H}$, does it represent an in-asymptote of some actual orbit? In other words, given $\ket{\psi_{\text{in}}}\in\mathcal{H}$, is there a state $\ket{\psi}$ that satisfies \eqref{inasymptote}? And similarly for out-asymptotes.

For arbitrary potentials we do not know the answer; but it can be proven that some types of potentials satisfy the asymptotic condition. For example, in $\mathbb{R}^3$, this condition holds for square integrable central potentials \cite{cook}, or even central potentials which are square integrable near the origin and fall faster than $r^{-1}$ at infinity \cite{hack}. In our case, the asymptotic condition is \textit{not} generally satisfied; this is a consequence of the existence of bound states of the \textit{free} Hamiltonian.

We know from spectral theory that the spectrum of a self-adjoint operator can be decomposed into three parts: a discrete, or pure point spectrum, associated with bound states; an absolutely continuous spectrum, associated with scattering states; and a residual, or singular spectrum, that can be associated with the particle undergoing a \textit{quantum walk} \cite{Pearson}. With those considerations, the generalized wave operators, or M\o ller operators, are defined by the following limits, if they exist: \footnote{The odd sign convention is connected to the Lippman-Schwinger formalism, which we will resort to later on.}
\begin{equation}
\begin{aligned}
\Omega^{\pm}:=\lim_{t\rightarrow \mp\infty} U(t)^\dagger U_0(t)P_{\text{ac}}(H_0),
\end{aligned}
\end{equation}
where $P_{\text{ac}}(H_0)$ is the projection on the absolutely continuous subspace of $H_0$. Being the limit of unitary operators composed with a projection, the wave operators are partial isometries with initial subspace $P_{\text{ac}}(H_0)\mathcal{H}=:\mathcal{H}_{\text{ac}}$ and some final subspaces $\mathcal{H}_{\pm}:=\text{Ran}(\Omega_\pm)$; partial isometries are characterized by the equations $\Omega\Omega^\dagger=P_{\text{Ran}(\Omega)}$ and $\Omega^\dagger\Omega=P$ where $P$ is the projection on its support.

Returning to our scattering experiment, let us rephrase it as follows: we prepare a particle in some in-asymptote, evolving as $e^{-iH_0 t}\ket{\phi}$ for $t\rightarrow -\infty$, during the scattering it evolves as $e^{-iH t}\Omega^+\ket{\phi}$ and we ask for the probability amplitude of finding the out-asymptote $e^{-iH_0 t}\ket{\psi}$ for $t\rightarrow \infty$. The state that evolved to the out-asymptote is described as $e^{-iH t}\Omega^-\ket{\psi}$ during the scattering. The amplitude of the process $\phi\rightarrow\psi$ is then given by the inner product
\begin{equation}
\begin{aligned}
\bra{\psi}(\Omega^-)^\dagger\Omega^+\ket{\phi}=:\bra{\psi}S\ket{\phi},
\end{aligned}
\end{equation}
where we have defined the S-matrix $S:=(\Omega^-)^\dagger\Omega^+$ acting on the Hilbert space $\mathcal{H}_{\text{ac}}$. Note that $S$ is not automatically unitary:
\begin{equation}
\begin{aligned}
S^\dagger S&=(\Omega^+)^\dagger\Omega^-(\Omega^-)^\dagger\Omega^+=(\Omega^+)^\dagger P_{\mathcal{H}_-}\Omega^+\\
&=(\Omega^+)^\dagger \Omega^+= P_{\text{ac}}(H_0),
\end{aligned}
\end{equation}
where the second line follows if, and only if $\mathcal{H}_+=\mathcal{H}_-$. The same argument holds for $SS^\dagger$. We will now study those spaces with more detail. As discussed before, we can think of $\mathcal{H}_+$ ($\mathcal{H}_-$) as the subspace of states that have an in(out)-asymptote, or scattering states. We expect that these states together with the bound states of the interacting system form an orthonormal basis for $\mathcal{H}$. This is always true:

\begin{theorem}\textbf{Orthogonality theorem:} Let $\mathcal{B}$ be the subspace of bound states of $H$. Then $\mathcal{H}_\pm$ are orthogonal to $\mathcal{B}$.
\end{theorem}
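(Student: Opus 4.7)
The plan is to show that $\bra{b}\Omega^\pm\ket{\phi}=0$ for every bound state $\ket{b}\in\mathcal{B}$ of $H$ and every $\ket{\phi}\in\mathcal{H}$; since $\mathcal{H}_\pm=\mathrm{Ran}(\Omega^\pm)$, this is equivalent to $\mathcal{B}\perp\mathcal{H}_\pm$. I would first reduce to the case in which $\ket{b}$ is a genuine eigenvector, $H\ket{b}=E_b\ket{b}$ with $E_b\in\mathbb{R}$, since $\mathcal{B}$ is the closed span of such eigenvectors (the pure point subspace of $H$) and both $\mathcal{B}$ and $\mathcal{H}_\pm$ are closed subspaces; the general statement then follows by linearity and continuity.

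The key identity is $U(t)\ket{b}=e^{-iE_b t}\ket{b}$, which gives $\bra{b}U(t)^\dagger = e^{iE_b t}\bra{b}$. Inserting this into the defining limit of $\Omega^\pm$ yields
\begin{equation}
\bra{b}\Omega^\pm\ket{\phi} = \lim_{t\to\mp\infty} e^{iE_b t}\,\bra{b}U_0(t)P_{\mathrm{ac}}(H_0)\ket{\phi}.
\end{equation}
The prefactor $e^{iE_b t}$ has modulus one, so the task reduces to proving that $\bra{b}U_0(t)\ket{\phi'}\to 0$ as $|t|\to\infty$, where $\ket{\phi'}:=P_{\mathrm{ac}}(H_0)\ket{\phi}$.

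To handle this step, I would invoke the spectral theorem for $H_0$ and rewrite the matrix element as a Fourier-type integral,
\begin{equation}
\bra{b}U_0(t)\ket{\phi'} = \int_{\mathbb{R}} e^{-iEt}\,d\mu_{b,\phi'}(E),\qquad \mu_{b,\phi'}(\Delta):=\bra{b}E_{H_0}(\Delta)\ket{\phi'}.
\end{equation}
The crucial observation is that the complex measure $\mu_{b,\phi'}$ is absolutely continuous with respect to Lebesgue measure: a Cauchy--Schwarz-type bound gives $|\mu_{b,\phi'}(\Delta)|^2\le \mu_b(\Delta)\,\mu_{\phi'}(\Delta)$, and since $\ket{\phi'}$ lies in the absolutely continuous subspace of $H_0$, the dominating measure $\mu_{\phi'}$ is absolutely continuous, forcing $\mu_{b,\phi'}$ to be so as well. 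The Riemann--Lebesgue lemma then delivers the desired decay, completing the argument.

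The main subtlety is precisely this last step: Riemann--Lebesgue does not apply to an arbitrary finite complex measure, but only to those admitting an $L^1$ density. The projector $P_{\mathrm{ac}}(H_0)$ built into the definition of the M\o ller operators is exactly what guarantees the required absolute continuity; without it, a bound-state component of $\ket{\phi}$ relative to $H_0$ could yield a persistent, non-decaying contribution and the orthogonality would fail.
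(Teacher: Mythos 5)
Your proof is correct. Note that the paper itself does not prove this theorem: it is quoted as a standard fact of formal scattering theory (the paper simply asserts ``This is always true'' and leans on the references), so there is no in-paper argument to compare against. What you have written is the classical textbook proof (it appears, essentially verbatim, in Reed--Simon and in Taylor): reduce to genuine eigenvectors of $H$, pull the phase $e^{iE_b t}$ out of $\bra{b}U(t)^\dagger$, and kill the remaining matrix element $\bra{b}U_0(t)P_{\mathrm{ac}}(H_0)\ket{\phi}$ by Riemann--Lebesgue. All the load-bearing steps are sound: the interchange of the strong limit with the inner product is justified by continuity of the inner product; the Cauchy--Schwarz bound $|\mu_{b,\phi'}(\Delta)|^2\le\mu_b(\Delta)\,\mu_{\phi'}(\Delta)$ follows from $E_{H_0}(\Delta)$ being a projection; and you correctly identify that absolute continuity of $\mu_{\phi'}$ is what licenses Riemann--Lebesgue, which is precisely the role of the projector $P_{\mathrm{ac}}(H_0)$ in the definition of the M\o ller operators. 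This last observation is also the physically relevant one for this paper, since here the free Hamiltonian $H_0$ genuinely has bound states (the single-particle bound states on the graph), so the projector is not vacuous.
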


The next condition is more involved, and is not necessarily guaranteed for any given physical system.

\begin{theorem}
\textbf{Asymptotic completeness:} Suppose that $\Omega^\pm$ exist. They are said to be \textbf{complete} if 
\begin{equation}
\begin{aligned}
\text{Ran}(\Omega^+)=\text{Ran}(\Omega^-)=\text{Ran}(P_{\text{ac}}(H)).
\end{aligned}
\end{equation}

In addition, if $\text{Ran}(P_{\text{ac}}(H))=\mathcal{B}^\perp$, we say that they are \textbf{asymptotically complete}.\footnote{For more details on the distinction, see for example \cite{Pearson}.}
\end{theorem}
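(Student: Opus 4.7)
The theorem is stated in definitional form: $\Omega^\pm$ are called \emph{complete} when their common range exhausts $\mathcal{H}_{\text{ac}}(H)$, and \emph{asymptotically complete} when additionally $H$ has no singular continuous spectrum, so that $\mathcal{H}_{\text{ac}}(H)=\mathcal{B}^\perp$. My plan is therefore not to derive these identities out of a vacuum, but to sketch what one would need to check in order to legitimately apply the terminology to the graph-scattering setting at hand.

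The plan splits the verification into three sub-tasks. First, the inclusion $\text{Ran}(\Omega^\pm)\subseteq\mathcal{H}_{\text{ac}}(H)$ is immediate from the intertwining identity $H\Omega^\pm=\Omega^\pm H_0$ (a routine consequence of the defining strong limit): it transports the absolutely continuous spectral measure of $H_0$ attached to $\ket{\psi}$ onto that of $H$ attached to $\Omega^\pm\ket{\psi}$. Second, the reverse inclusion $\mathcal{H}_{\text{ac}}(H)\subseteq\text{Ran}(\Omega^\pm)$ --- the substantive content of completeness --- follows abstractly from the Kato-Rosenblum theorem, since $H-H_0=H_G$ is a finite-rank perturbation of $H_0$. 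Third, the same finite-rank (hence trace-class) perturbation theory rules out singular continuous spectrum of $H$, so that $\mathcal{H}_{\text{ac}}(H)=\mathcal{B}^\perp$ and asymptotic completeness follows.

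A more constructive route, tailored to the Lippmann-Schwinger machinery of Sec.\ \ref{sec:2}, would use Cook's method: identify a dense set of $\ket{\phi}\in\mathcal{H}_{\text{ac}}(H)$ whose free evolution propagates away from the scattering center fast enough that $\int_{-\infty}^{\infty}\|H_G\,e^{-iHt}\ket{\phi}\|\,dt<\infty$, and then extract the inverse wave operator as a strong limit. The virtue of this path is that the scattering states it produces are explicitly the ones that feed into the matrices $\mathbb{S}(z)$ and $\Psi(z)$ already derived, so the abstract and constructive pictures coincide.

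The main obstacle I anticipate is the behavior at the band-edge momenta $p\in\{-\pi,0,\pi\}$, where the group velocity $dE/dp=-2\sin p$ vanishes: Cook-type integrands lose integrability and the matrix $Q(z)$ can fail to be invertible when $z+z^{-1}$ coincides with an eigenvalue of $D$. I would handle these by smoothly cutting out neighborhoods of the critical momenta --- completeness on a dense subspace automatically extends to the whole absolutely continuous subspace by continuity --- and then addressing the band edges individually via a limiting-absorption analysis of $Q(z)^{-1}$, classifying each degenerate point as an embedded eigenvalue, a resonance, or neither.
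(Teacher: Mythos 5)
You are right that this ``theorem'' is really a definition, and the paper treats it exactly that way: no proof is given, and the subsequent text only invokes the Kato--Rosenblum theorem to guarantee \emph{completeness} (equality of $\text{Ran}(\Omega^\pm)$ with $\text{Ran}(P_{\text{ac}}(H))$) for the trace-class perturbations arising from a localized interaction. Your first two sub-tasks --- the intertwining argument for $\text{Ran}(\Omega^\pm)\subseteq\mathcal{H}_{\text{ac}}(H)$ and the appeal to Kato--Rosenblum for the reverse inclusion --- match the paper's logic, and the Cook-method alternative and the band-edge caveats about $Q(z)^{-1}$ are sensible additions consistent with the single-particle analysis in Appendix A.

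The one genuine error is your third sub-task: trace-class (or finite-rank) perturbation theory does \emph{not} rule out singular continuous spectrum of $H$. The Kato--Rosenblum theorem preserves the absolutely continuous part and yields completeness, but the singular continuous subspace is notoriously unstable even under rank-one perturbations, so $\text{Ran}(P_{\text{ac}}(H))=\mathcal{B}^\perp$ does not follow abstractly from $H-H_0$ being finite rank. This is precisely the distinction the paper's footnote (citing Pearson) is pointing at, and it is why the paper claims only completeness, not asymptotic completeness, from Kato--Rosenblum. Establishing the absence of singular continuous spectrum in this setting would require a separate argument (e.g., a limiting-absorption principle or an explicit spectral decomposition showing that the scattering states, bound states, and half-bound states exhaust the spectrum), which neither you nor the paper supplies here.
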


In the case of scattering on graphs, considering the interaction to be localized, e.g., the particles interact only on the graph $G$ and not on the rails, completeness is guaranteed by the following theorem (see \cite{reedsimon3}):

\begin{theorem}
\textbf{(Kato-Rosenblum 1957):} Given $H_0,H$ self-adjoint operators with $H-H_0$ trace class, then the wave operators exist and are complete.    
\end{theorem}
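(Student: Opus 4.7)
The plan is to follow the classical two-step proof as presented in Reed-Simon: first establish existence of $\Omega^\pm$, then leverage the symmetry of the trace class hypothesis to derive completeness. In the paper I would simply cite this theorem rather than reprove it, but the sketch below captures the main ideas one would need if a self-contained argument were required.

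For existence, the Duhamel identity yields
\begin{equation*}
\bigl(e^{iHt_2}e^{-iH_0 t_2} - e^{iHt_1}e^{-iH_0 t_1}\bigr)\phi = i\int_{t_1}^{t_2} e^{iHs}\, V\, e^{-iH_0 s}\phi \, ds,
\end{equation*}
so strong convergence on $\mathcal{H}_{\text{ac}}(H_0)$ reduces to a Cauchy criterion, which in turn is controlled by the integrability of $\|V e^{-iH_0 s}\phi\|$ in $s$ on a dense subset. The key analytic input is Rosenblum's smoothness lemma: for any Hilbert-Schmidt $K$ and self-adjoint $H_0$,
\begin{equation*}
\int_{-\infty}^{\infty} \|K e^{-iH_0 t}\phi\|^2 \, dt \le 2\pi \|K\|_{\text{HS}}^2 \|\phi\|^2 \qquad \text{for all } \phi \in \mathcal{H}_{\text{ac}}(H_0).
\end{equation*}
This estimate is proved via the spectral theorem for $H_0$ restricted to $\mathcal{H}_{\text{ac}}$, which realizes it as a multiplication operator on a direct integral of $L^2$ spaces, so that the time integral collapses to a Plancherel identity. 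Since $V$ is trace class, I would factor $V = A^* B$ with $A,B$ Hilbert-Schmidt; combining the Rosenblum estimate with Cauchy-Schwarz then bounds $\|V e^{-iH_0 s}\phi\|$ in an integrable way on a dense set, closing the existence argument.

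For completeness, the decisive observation is that the hypothesis is symmetric: $H_0 - H = -V$ is also trace class, so the same existence argument applied to the reversed pair produces inverse wave operators $\tilde\Omega^\pm = \lim_{t\to\mp\infty} e^{iH_0 t}e^{-iHt} P_{\text{ac}}(H)$. The composition identity $\tilde\Omega^\pm \Omega^\pm = P_{\text{ac}}(H_0)$ follows directly from the definitions, since the intermediate factors $e^{\pm iHt}$ cancel inside the limit, and analogously $\Omega^\pm \tilde\Omega^\pm = P_{\text{ac}}(H)$. Combined with the partial isometry structure of both sets of wave operators and the orthogonality theorem already invoked above, this forces $\mathrm{Ran}(\Omega^\pm) = \mathrm{Ran}(P_{\text{ac}}(H))$, which is precisely completeness.

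The main obstacle is Rosenblum's smoothness lemma itself, which is genuinely analytic and cannot be obtained by purely algebraic manipulation of operator norms; it requires the spectral multiplicity theory of $H_0$ on its absolutely continuous subspace. Once that estimate is in hand, both the existence proof and the symmetric completeness argument reduce to routine bookkeeping with the intertwining relations $\Omega^\pm H_0 = H \Omega^\pm$ on appropriate domains.
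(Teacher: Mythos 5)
The paper does not actually prove this statement: it is quoted as a classical theorem with a pointer to Reed and Simon, so your primary move --- cite it rather than reprove it --- is exactly what the paper does, and for the purposes of this manuscript that is the right call.

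Since you offered a sketch, two points in it need correction. First, the smoothness estimate is misstated: for Hilbert--Schmidt $K$ one has $\int_{-\infty}^{\infty}\|Ke^{-iH_0t}\phi\|^2\,dt\le 2\pi\|K\|_{\mathrm{HS}}^2\,\mathrm{ess\,sup}_\lambda\,\tfrac{d}{d\lambda}\langle\phi,E_\lambda\phi\rangle$, valid on the dense set of $\phi\in\mathcal{H}_{\mathrm{ac}}(H_0)$ with bounded spectral density; the bound cannot hold with $\|\phi\|^2$ on the right-hand side (take $K$ of rank one and let the spectral support of $\phi$ shrink at fixed norm --- the left side grows without bound). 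Second, and more seriously, the claim that this estimate ``combined with Cauchy--Schwarz'' makes $\|Ve^{-iH_0s}\phi\|$ integrable in $s$ is a genuine gap: the estimate gives $L^2$ control in time, and $L^2(\mathbb{R})$ does not embed in $L^1(\mathbb{R})$, so Cook's criterion cannot be closed this way. The actual Kato--Rosenblum argument instead bounds $\|(W(t)-W(s))\phi\|^2$ directly by pairing the Duhamel integrand against $(W(t)-W(s))\phi$, applies Cauchy--Schwarz in the time variable so that only $L^2$-in-time control of each factor is required, and reduces general trace-class $V$ to finite rank; this is where the real analytic work lies, and it is not recovered by your sketch. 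Your completeness step --- exploiting the symmetry of the trace-class hypothesis to construct the inverse wave operators and invoking the chain rule $\tilde\Omega^{\pm}\Omega^{\pm}=P_{\mathrm{ac}}(H_0)$ and $\Omega^{\pm}\tilde\Omega^{\pm}=P_{\mathrm{ac}}(H)$ --- is the standard and correct route, provided you also record that the intertwining relation forces $\mathrm{Ran}(\Omega^{\pm})\subseteq\mathcal{H}_{\mathrm{ac}}(H)$, which is what lets the middle projections drop out of the composition.
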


We now describe the domain of the S-matrix; for single-particle scattering the absolutely continuous subspace of $H_0$ is spanned by the free states $\ket{p^n}$, and the pure point subspace, or 'bound states' is spanned by $\ket{u}$, $u\in G_0$.

The situation for two-particle scattering is a bit more complicated. As we define in the next section, the free Hamiltonian for each particle is the full single-particle Hamiltonian described in the last section---namely, both the rails and the scattering graph $G$---, and the potential $V$ is a Bose-Hubbard type interaction between the particles. So the free states are spanned by tensor products of two scattering states on the graph $\ket{p^{n+}}$ and/or single particle bound states $\ket{\chi}$, excluding combinations of two bound states \footnote{Since they lie in the pure point spectrum of the Hamiltonian.}, in such a way that a basis may be chosen to consist of the states $\ket{p_1^{n+}}\otimes\ket{p_2^{n+}}$, $\ket{p_1^{n+}}\otimes\ket{\chi}$ and $\ket{\chi}\otimes\ket{p_2^{n+}}$, for $p_1,p_2\in(-\pi,0)$, $n=1,\cdots,N$ and $\chi$ in some basis for the bound states. Henceforth, we will omit the $\otimes$ in the tensor product notation for brevity. In the following section, in order to compute the S-matrix, we shall again use the Lippmann-Schwinger formalism, applied to two-particle scattering. For an informal derivation of this formalism applied to discrete space systems, refer to \cite{ryndyk} chapter 3; for more technical remarks, refer to \cite{reedsimon3} chapter XI.6 and the references therein.

\subsection{Two-particle S-matrix}
\label{subsec:3B}
Before we consider the scattering problem, let us comment on the notation used for bosons and distinguishable particles. We often denote two-particle states with double label kets, $\ket{\zeta_1\zeta_2}$, without making a particular choice of symmetrization. Whenever we refer to distinguishable particles this should be interpreted as $\ket{\zeta_1\zeta_2}=\ket{\zeta_1}\ket{\zeta_2}$, whereas for bosons and fermions it should be interpreted as $\ket{\zeta_1\zeta_2}=(\ket{\zeta_1}\ket{\zeta_2}\pm\ket{\zeta_2}\ket{\zeta_1})/\sqrt{2}$, as appropriate. Whenever particle statistics is unspecified, such as in the lead-up to our main result, the statements should be interpreted as holding for all particle types. When the distinction becomes relevant, such as for numerical results, the particle statistics will be stated explicitly.

We now consider the scattering of two particles on an arbitrary graph $G$. As before,  the full graph where the scattering takes place is a finite graph $G$ appended with $N$ rails, namely, semi-infinite lines. In addition, we will assume that the particles interact via a Bose-Hubbard interaction\footnote{In appendix \ref{apx:E} we consider more general potentials.} restricted to the internal vertices of the graph. Denoting $\ket{a}\bra{b}_1\equiv\ket{a}\bra{b}\otimes\mathds{1}$ and $\ket{a}\bra{b}_2\equiv\mathds{1}\otimes\ket{a}\bra{b}$, the full Hamiltonian reads as follows:
\begin{equation}
\begin{aligned}
H=H_{\text{rails}}+H_G+V,
\end{aligned}
\end{equation}
where
\begin{equation}
\begin{aligned}
H_{\text{rails}}&=\sum_{d=1}^2\sum_{n=1}^N \sum_{x\ge 1} \Big(\ket{x;n}\bra{x+1;n}_d+\ket{x+1;n}\bra{x;n}_d\Big)\\
H_G&=\sum_{d=1}^2\sum_{uv\in E(G)} \ket{u}\bra{v}_d\\
V&=U\sum_{w\in G^0}\ket{ww}\bra{ww},
\end{aligned}
\end{equation}
in other words, the particles interact with each other when they are on the internal vertices of the graph and propagate freely on the semi-infinite lines. We denote $H_0:=H_{\text{rails}}+H_G$. In the absence of the interaction, the particles still scatter on the graph, and the interaction-free states are the ones described by Eq. \eqref{singlefree}. If we want to consider incoming particles of momenta $p_1$ and $p_2$ in rails $n_1$ and $n_2$ respectively in the distant past, or if we want to consider an incoming particle of momentum $p_1$ in rail $n_1$ with the other particle in some bound state $\chi$, we use the following Lippman-Schwinger equation:
\begin{equation}
\begin{aligned}
\ket{{\xi_1\xi_2}^+}&=\ket{\xi_1\xi_2} + (E-H_0+ i\epsilon)^{-1}V\ket{{\xi_1\xi_2}^+},
\end{aligned}
\end{equation}
where $\xi_1,\xi_2\in\{\chi,{p_1^{n_1+}},{p_2^{n_2+}}\}$ \footnote{The full notation for the eigenstate corresponding to two interacting particles incoming on the graph would be $\ket{{p_1^{n_1+} p_2^{n_2+}}^+}$, which we will try to avoid writing.}. Note that $\ket{p^{n+}}$ stands for the single particle scattering state on the graph, described in Section \ref{sec:2}. Analogously, if we want to consider outgoing particles of momenta $k_1$ and $k_2$ at rails $m_1$ and $m_2$ in the distant future, or if we want to consider an outgoing particle of momentum $k_1$ in rail $m_1$ with the other particle in some bound state $\ket{\chi}$, we use the following Lippmann-Schwinger equation:
\begin{equation}
\begin{aligned}
\ket{{\zeta_1\zeta_2}^-}=\ket{\zeta_1\zeta_2} + (E-H_0- i\epsilon)^{-1}V\ket{{\zeta_1\zeta_2}^-},
\end{aligned}
\end{equation}
where $\zeta_1,\zeta_2\in\{\chi,{k_1^{m_1-}},{k_2^{m_2-}}\}$. As before, we consider ingoing momenta $p\in (-\pi,0)$ and outgoing momenta $k\in (0,\pi)$.

We are interested in computing the S-matrix, defined by:
\begin{equation}
\begin{aligned}
S_{\zeta_1\zeta_2;\xi_1\xi_2}=\braket{{\zeta_1\zeta_2}^-|{\xi_1\xi_2}^+},
\end{aligned}
\end{equation}
and note that we are using different bases for input and output asymptotes. This is similar to the setting described in \cite{inputoutput}. As in the single-particle situation, we can arrive at the simpler form:
\begin{equation}
\begin{aligned}
&S_{\zeta_1\zeta_2;\xi_1\xi_2}=\braket{\zeta_1\zeta_2|\xi_1\xi_2}-2\pi i\delta(E_{\zeta_1\zeta_2}-E_{\xi_1\xi_2})\bra{\zeta_1\zeta_2}V\ket{{\xi_1\xi_2}^+},
\end{aligned}
\end{equation}
where $E_{\zeta_1\zeta_2}$ ($E_{\xi_1\xi_2}$) is the energy of the state $\ket{\zeta_1\zeta_2}$ ($\ket{\xi_1\xi_2}$). For example, choosing the $\xi$'s and $\zeta$'s as scattering states for distinguishable particles, the above equation is written as
\begin{equation}
\begin{aligned}
&S_{k_1^{m_1} k_2^{m_2}; p_1^{n_1} p_2^{n_2}}=\braket{{k_1^{m_1}}^-|{p_1^{n_1}}^+}\braket{{k_2^{m_2}}^-|{p_2^{n_2}}^+}-2\pi i\delta_E\bra{{k_1^{m_1}}^-}\bra{{k_2^{m_2}}^-}V\ket{{(p_1^{n_1+} p_2^{n_2+}})^+}\\
&\hspace{0.5cm}=\mathbb{S}_{m_1n_1}(e^{ip_1})\mathbb{S}_{m_2n_2}(e^{ip_2})\delta(k_1+p_1)\delta(k_2+p_2)-2\pi i\delta_E\bra{{k_1^{m_1}}^-}\bra{{k_2^{m_2}}^-}V\ket{({p_1^{n_1+} p_2^{n_2+}})^+},
\end{aligned}
\end{equation}
where $\delta_E=\delta(E_{k_1k_2}-E_{p_1p_2})$.

Note that, in contrast to the single-particle case, here the individual momentum of each particle is \textit{not} necessarily conserved; after the scattering, the particles may exhibit spectral entanglement. In Appendix \ref{apx:E} we compute the analytical expression for the S-matrix:
\begin{equation}
\begin{aligned}
S_{\zeta_1\zeta_2;\xi_1\xi_2}=\braket{\zeta_1\zeta_2|\xi_1\xi_2}
-2\pi i \delta(E_{\zeta_1\zeta_2}-E_{\xi_1\xi_2})\sum_{u,v\in G^0}\braket{\zeta_1\zeta_2|uu}\left(\frac{1}{U}\mathds{1}-J\right)^{-1}_{uv}\braket{vv|\xi_1\xi_2},
\end{aligned}
\end{equation}
where $J$ is a $N\times N$ matrix defined in Eq.\ (\ref{Juv}). Note that the first term on the RHS are matrix elements of the identity operator; let us write $S=:\mathds{1}+R$, so that the above equation implies that
\begin{equation}
\begin{aligned}
R_{\zeta_1\zeta_2;\xi_1\xi_2}&=
-2\pi i \delta(E_{\zeta_1\zeta_2}-E_{\xi_1\xi_2}) \sum_{u,v\in G^0}\braket{\zeta_1\zeta_2|uu}\left(\frac{1}{U}\mathds{1}-J\right)^{-1}_{uv}\braket{vv|\xi_1\xi_2}\\
&=:\delta(E_{\zeta_1\zeta_2}-E_{\xi_1\xi_2})\mathcal{R}_{\zeta_1\zeta_2;\xi_1\xi_2}.
\end{aligned}
\end{equation}

If we impose particle statistics, we can further simplify this expression in terms of single particle amplitudes:
\begin{equation}
\begin{aligned}
R_{\zeta_1\zeta_2;\xi_1\xi_2}=
-2\pi i \delta(E_{\zeta_1\zeta_2}-E_{\xi_1\xi_2}) \sum_{u,v\in G^0}b^2 \braket{\zeta_1|u}\braket{\zeta_2|u}\left(\frac{1}{U}\mathds{1}-J\right)^{-1}_{uv}\braket{v|\xi_1}\braket{v|\xi_2},
\end{aligned}
\end{equation}
where $b^2=1$ or $2$, depending whether we consider distinguishable particles or bosons. Note that this implies that the interaction for bosons is double that of distinguishable particles, $\mathcal{R^\text{dist}}=2\mathcal{R^\text{bosons}}$; this may be seen as a consequence of boson bunching. These equations are the main result of this work; we have written an explicit formula for the two-particle S-matrix, depending on amplitudes of the free and bound single-particle states, and on the inverse of a $M\times M$ matrix whose coefficients are related to solutions to a quadratic eigenvalue equation --- this is where the properties of $\gamma(z)$ shown in Appendix \ref{apx:C} are useful. We emphasize that this formula is exact for any interaction strength $U$, it does not rely on a perturbative approach such as the Born approximation. In fact, throughout this work we often work in the limit of very large $U$ that is out of reach of the perturbative approach. We also point out that our formalism does also apply to more general potentials within the graph (with finite support) rather than just single-site interactions. We present only the Bose-Hubbard case here for simplicity, but the more general case can be found in Appendix \ref{apx:E}. 

In the next section, we apply this result to study features of two-particle graph scattering and to analyze properties of some graph families.

\section{Applications}
\label{sec:4}

We now apply the formalism developed in the previous Section to describe scattering processes in graphs in more detail. In Sec.\ \ref{sec:4A} we study how the scattering of one particle is modified by the presence of a bound particle in the graph. We classify different outcomes, such as elastic and inelastic scattering and ejection of the bound particle, and then apply these results to specific situations in Sec.\ \ref{sec:4B}. In Sec.\ \ref{sec:4C} we study the scattering of two traveling particles, as opposed to having one bound to the graph, looking at some examples where we expect that the particles may conserve their individual momenta. In Sec.\ \ref{sec:4D} we define a two-particle cross section, which is a measure of the interaction between the particles during scattering.

In the first two subsections we consider scattering states in energy eigenbasis; with respect to momentum eigenbasis they have an extra normalizing factor $\ket{E^{n\pm}}=\frac{1}{\sqrt[4]{4-E^2}}\ket{p^{n\pm}}$. This is reminiscent of the orthonormalization condition, $\braket{p|p'}=\frac{1}{2\pi}\sum_{n=-\infty}^\infty e^{i(p'-p)n}=\delta(p-p')$, on the line lattice:
\begin{equation}
\begin{aligned}
\braket{E|E'}=\delta(E-E')=\delta(2\cos(p)-2\cos(p'))=\frac{1}{|2\sin{p}|}\delta(p-p')=\frac{\braket{p|p'}}{\sqrt{4-E^2}}.
\end{aligned}
\end{equation}

Note that this is just a consequence of the dispersion relation $E=2\cos(p)$, and passes on to more general graphs.

\subsection{Single-particle scattering in the presence of a bound state}
\label{sec:4A}

For single-particle scattering, as in Sec. \ref{sec:2}, the matrix coefficients $t^{mn}(E):=\mathbb{S}^{mn}(e^{ip})$, called transmission\footnote{Or reflection coefficient if $n=m$.} coefficients, represent the amplitude of the process where the particle is incoming at rail $n$ with energy $E\in(-2,2)$ and scatters to rail $m$, satisfying energy conservation. Let us check that $|t^{mn}(E)|^2$ is the actual probability of finding the particle at rail $m$ after scattering; consider a wavepacket $\ket{\psi(E^n)}$ peaked around energy $E$ incoming at rail $n$. That probability can be calculated as follows:
\begin{equation}
\begin{aligned}
||\mathbb{P}^{m}S\ket{\psi(E^n)}||^2,
\end{aligned}
\end{equation}
where $\mathbb{P}_{m}:=\int_{-2}^2 \ket{\mathcal{E'}^m}\bra{\mathcal{E'}^m}d\mathcal{E'}$ is the projector onto rail $m$. This quantity can be computed directly; consider the state $\ket{\psi(E^n)}:=\frac{1}{\sqrt{2\epsilon}}\int_{E-\epsilon}^{E+\epsilon}\ket{\mathcal{E}^n}d\mathcal{E}$. Then, we can write
\begin{equation}
\begin{aligned}
\bra{\mathcal{E'}^m}S\ket{\psi(E^n)}&=\frac{1}{\sqrt{2\epsilon}}\int_{E-\epsilon}^{E+\epsilon}\bra{\mathcal{E'}^m}S\ket{\mathcal{E}^n}d\mathcal{E}\\
&=\frac{1}{\sqrt{2\epsilon}}\int_{E-\epsilon}^{E+\epsilon}t^{mn}(\mathcal{E})\delta(\mathcal{E}'-\mathcal{E})d\mathcal{E}=\frac{1}{\sqrt{2\epsilon}}t^{mn}(\mathcal{E}')\mathds{1}_{|\mathcal{E}'-\mathcal{E}|<\epsilon}\\
||\mathbb{P}^{m}S\ket{\psi(E^n)}||^2
&=\frac{1}{2\epsilon}\int_{-2}^{2}|t^{mn}(\mathcal{E}')|^2\mathds{1}_{|\mathcal{E}'-E|<\epsilon} d\mathcal{E}'
\longrightarrow|t^{mn}(E)|^2,
\end{aligned}
\end{equation}
where $\mathds{1}_{\mathcal{C}}$ is the indicator function, assuming value $1$ if the condition $\mathcal{C}$ is satisfied and $0$ otherwise, and in the final step we let $\epsilon\rightarrow0^+$. 

Let us rephrase these steps for two-particle scattering. Consider a single particle undergoing scattering on the graph with the presence of a second particle in some bound state. The probability of the first particle scattering from rail $n$ to rail $m$ is:
\begin{equation}
\begin{aligned}
||\mathbb{P}_{(2)}^{m}S\ket{\psi(E^n);\chi}||^2 
\end{aligned}
\end{equation}
where $\ket{\psi(E^n);\chi}$ is thought of as the first particle with energy peaked around $E$ incoming at rail $n$ and the other particle in the bound state $\ket{\chi}$, and $\mathbb{P}_{(2)}^{m}$ is the projector that measures if the first particle is found traveling in rail $m$ at $t\rightarrow\infty$, ignoring the state of the other particle. For distinguishable particles, the projector reads\footnote{For bosons, we just need to restrict the integral to $\mathcal{E}<\mathcal{E'}$ if $m'=m$, to avoid double counting.}
\begin{equation}
\begin{aligned}
\mathbb{P}^{m}_{(2)}&=\int_{-2}^2\sum_{\chi'} \ket{{\mathcal{E}}^{m-}\chi'}\bra{{\mathcal{E}}^{m-}\chi'}d\mathcal{E} + \int_{-2}^2\int_{-2}^2\sum_{m'} \ket{{\mathcal{E}^m}^-{\mathcal{E}'\,^{m'-}}}\bra{{\mathcal{E}^m}^-{\mathcal{E}'\,^{m'-}}}d\mathcal{E}d\mathcal{E}'\\
&:=\sum_{\chi'}\mathbb{P}^{m}_{\chi'}+\sum_{m'}\mathbb{P}^{m}_{m'},
\end{aligned}
\end{equation}
where we split the projector further into terms of the type $\mathbb{P}^{m}_{a}$, where $a$ is either the bound state the second particle is found, or the rail it scatters into.

Using the condition of narrow wavepackets, we can show that\footnote{Again, for bosons, we need to restrict the integral to $\mathcal{E}<\mathcal{E}_{\text{cons}}$ if $m'=m$, to avoid double counting.}
\begin{equation}
\begin{aligned}
||\mathbb{P}_{(2)}^{m}S\ket{\psi(E^n);\chi}||^2&= \sum_{\chi'}||\mathbb{P}_{\chi'}^{m}S\ket{\psi(E^n);\chi}||^2+\sum_{m'}||\mathbb{P}_{m'}^{m}S\ket{\psi(E^n);\chi}||^2\\
&= |t^{mn}(E)+\mathcal{R}_{E^{m-}\chi;{E^{n+}}\chi}|^2 + \sum_{\chi'\neq\chi}|\mathcal{R}_{E_{cons}^{m-}\chi';{E^{n+}}\chi}|^2\\ &+ \int_{-2}^2\sum_{m'}|\mathcal{R}_{\mathcal{E}^{m-}\mathcal{E}_{\text{cons}}^{m'-};E^{n+}\chi}|^2\mathds{1}_{|\mathcal{E}_{\text{cons}}|\leq 2} d\mathcal{E}\\
&=:|t^{mn}_{\chi\rightarrow\chi}(E)|^2+ \sum_{\chi'\neq\chi}|t^{mn}_{\chi\rightarrow\chi'}(E)|^2+p^{mn}_{ej}(E,\chi)
\end{aligned}
\end{equation}
where $E_{cons}=E+E_\chi-E_{\chi'}$ and $\mathcal{E}_{\text{cons}}=E+E_\chi-\mathcal{E}$. Note that the probabilities of scattering are modified by the presence of a particle bound to the graph through every possible process that is available; the first term represents \textit{elastic transmission}, that is, $t^{mn}_{\chi\rightarrow\chi}(E)$ is the amplitude for the process where the incoming particle scatters on the graph and on the second particle, with no energy exchange between them. The next family of processes we call \textit{inelastic transmission}, where $t^{mn}_{\chi\rightarrow\chi'}(E)$ is the amplitude of the scattering particle exciting the bound particle to a different bound state, leaving the graph with energy $E_{cons}$, respecting energy conservation. The last process is the ejection of the bound state, where both particles end up on free scattering states, and $p^{mn}_{ej}(E,\chi)$ is the probability of finding the first particle on rail $m$ and the second particle on \textit{any} rail after the scattering, and $\mathcal{R}_{\mathcal{E}^{m-}\mathcal{E}_{\text{cons}}^{m'-};E^{n+}\chi}$ is the corresponding amplitude; again, by conservation of energy, the energy of the second particle after the scattering is $\mathcal{E}_{\text{cons}}$.

For distinguishable particles, there is a different process, where the first particle ends trapped on the graph while the second particle is ejected; for indistinguishable particles this is already included in elastic/inelastic transmission. For completeness, we write the corresponding projector and probabilities associated with these processes:
\begin{equation}
\begin{aligned}
\mathbb{P}^{\text{bound;m}}_{(2)}&=\int_{-2}^2\sum_{\chi'} \ket{{\chi'\mathcal{E}}^{m-}}\bra{{\chi'\mathcal{E}}^{m-}}d\mathcal{E} \\
||\mathbb{P}_{(2)}^{\text{bound;m}}S\ket{\psi(E^n);\chi}||^2&=\sum_{\chi'}|\mathcal{R}_{\chi'E_{cons}^{m-};{E^{n+}}\chi}|^2
\end{aligned}
\end{equation}

We note that, following the discussion of Section \ref{sec:3A}, and for the two-particle interaction considered, there is no process where both particles remain on bound states asymptotically.




\subsection{Analysis of scattering with a bound particle} 
\label{sec:4B}
Now we apply the classification discussed above to some specific graphs; for a list of the graphs presented in this Section and their properties, see \ref{apx:D}. We study graphs with only two rails, so all the scattering matrices for a single particle $2\times2$, and we interpret elements $\mathbb{S}_{11}$ and $\mathbb{S}_{21}$ as reflection coefficients $r$ and transmission coefficients $t$, respectively. Also, we consider the Bose-Hubbard interaction and we set the coupling constant $U=\infty$ henceforth.

We first consider the graph \texttt{AC(4)}, shown in Fig.\ \ref{fig:AC2(4)} a). In Fig.\ \ref{fig:AC2(4)} we plot $|r|^2$ and $|t|^2$ for single-particle scattering. Note, for example, that we have perfect transmission at $E=0$ and perfect reflection at $E=\pm\sqrt{2}$.

\begin{figure}[h]
    \centering
    \includegraphics[width=0.9\textwidth]{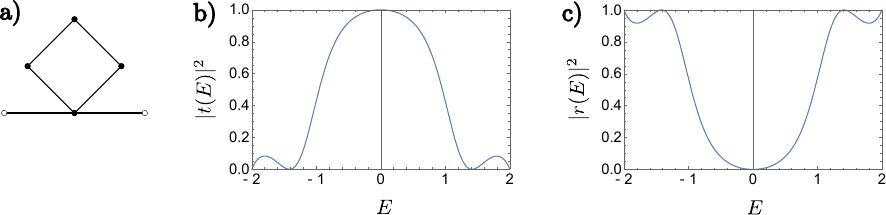}
    \caption[\texttt{AC(4)}]{From left to right: The graph \texttt{AC(4)}, the single particle transmission coefficient $|t|^2$ and the single particle reflection coefficient $|r|^2$ as a function of the energy of the incident particle.}
    \label{fig:AC2(4)}
\end{figure}

We now analyze two-particle scattering on this graph; first, we note that for this graph there are two unconfined bound states $\ket{\chi_\pm}$ with energies near $E_{ev}=\pm 2.38$, and one confined bound state $\ket{\chi_c}$ with energy $E_c=0$. As we discussed in the last section, we will consider one particle at some bound state on the graph and the other incoming particle with energy $E\in (-2,2)$, and will study the non-linear scattering coefficients. Let us begin with the elastic reflection/transmission amplitudes.
 
In Fig.\  \ref{fig:appendcyc4 r and t evanescent} we plot the elastic reflection probabilities for distinguishable particles and bosons, $|r_{\chi\rightarrow\chi}^D|^2$ and $|r_{\chi\rightarrow\chi}^B|^2$ respectively, and the elastic transmission probabilities $|t_{\chi\rightarrow\chi}^D|^2$ and $|t_{\chi\rightarrow\chi}^B|^2$, in the presence of each evanescent bound state:

\begin{figure}[H]
    \centering
    \includegraphics[width=0.9\textwidth]{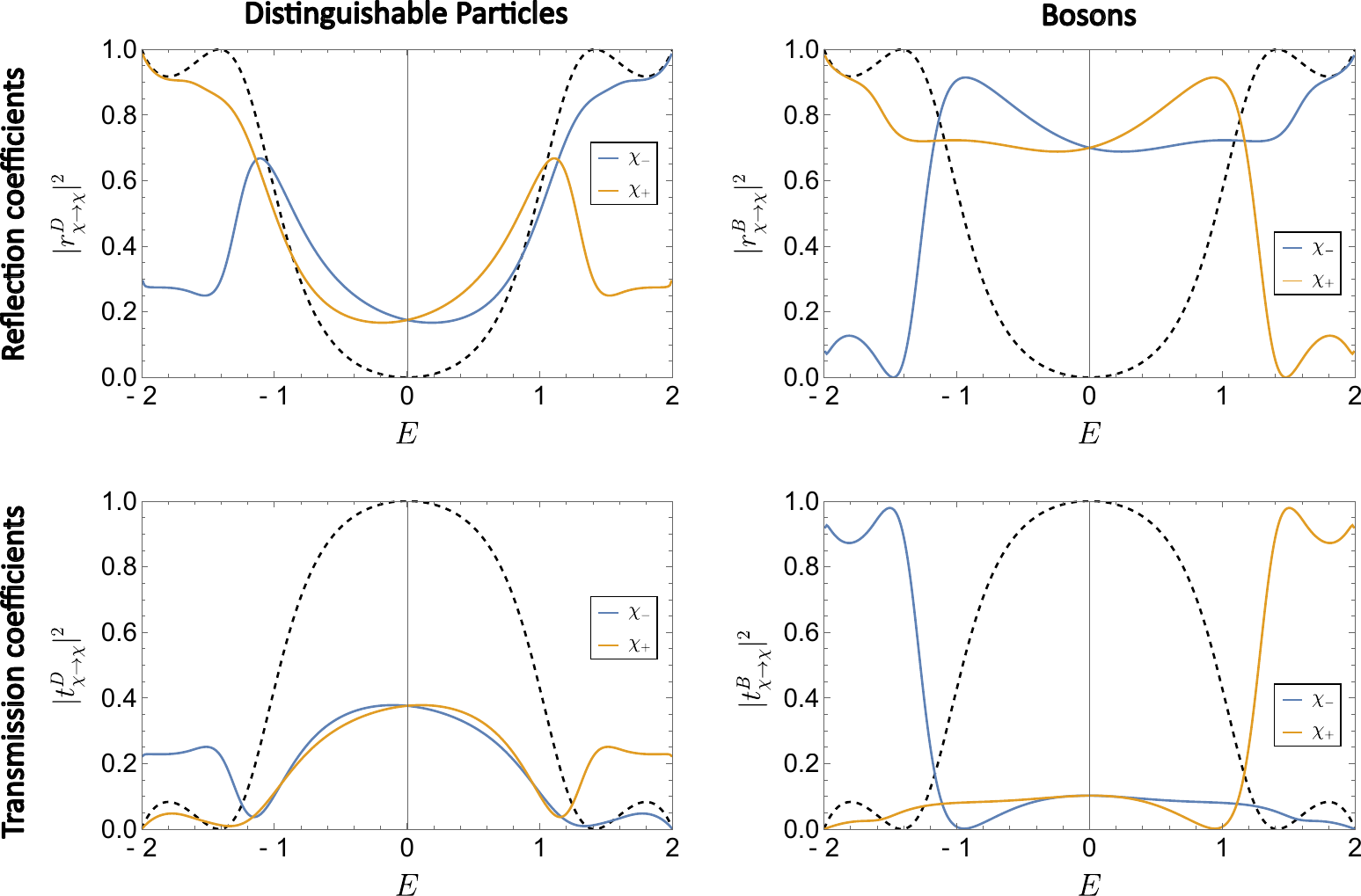}
    \caption[\texttt{AC(4)}: elastic transmission and reflection probabilities with evanescent bound state]{Distinguishable particle elastic reflection and transmission probabilities, $|r_{\chi\rightarrow\chi}^D|^2$ and $|t_{\chi\rightarrow\chi}^D|^2$ and boson elastic reflection and transmission probabilities, $|r_{\chi\rightarrow\chi}^B|^2$ and $|t_{\chi\rightarrow\chi}^B|^2$ in the presence of an evanescent bound state, for the graph \texttt{AC(4)}. Dashed line: Corresponding single-particle reflection/transmission probability.}
    \label{fig:appendcyc4 r and t evanescent}
\end{figure}

We see that the bosonic case shows more dramatic change; this is expected by virtue of the symmetric interaction considered. We can see that for the scattering of bosonic particles, the presence of the evanescent state $\ket{\chi_\pm}$ induces perfect \textit{transmission}  at energies near $\pm\sqrt{2}$, in addition to an almost perfect reflection at some energy near $E=\pm 1$. We also note the dramatic asymmetry in the bosonic amplitudes: in the transition to near maximum scattering energy, the transmission probability changes abruptly from near zero to near unity, in the case of the positive energy evanescent bound state; an analogous result holds for the other bound state. This could be exploited as a conditional momentum filter, a gadget that allows only a small range of the momentum wavepacket to be transmitted, depending on the energy of the evanescent bound particle present on the graph.

On the other hand, if the particle bound to the graph is in the \textit{confined} state $\ket{\chi_c}$, then the scattered particle is \textit{perfectly} transmitted for \textit{all} possible energies in the bosonic case, as seen in Fig.\ \ref{fig:appendcyc4 t confined}; this is similar to \textit{infinite hitting times} discussed in \cite{inftyhitting}, where some symmetry of the graph makes a particle unable to visit some vertices. The plots relating to two-particle amplitudes were obtained numerically, but this result can be verified analytically. Naturally, there is no reflection. We also note that, in the case of distinguishable particles, there is elastic transmission and reflection, and also the corresponding processes with the exchange of the particles in the final state; such detail does not manifest in the identical particles case. From now on we focus only on bosonic amplitudes.

\begin{figure}[h]
    \centering
    \includegraphics[width=0.5\textwidth]{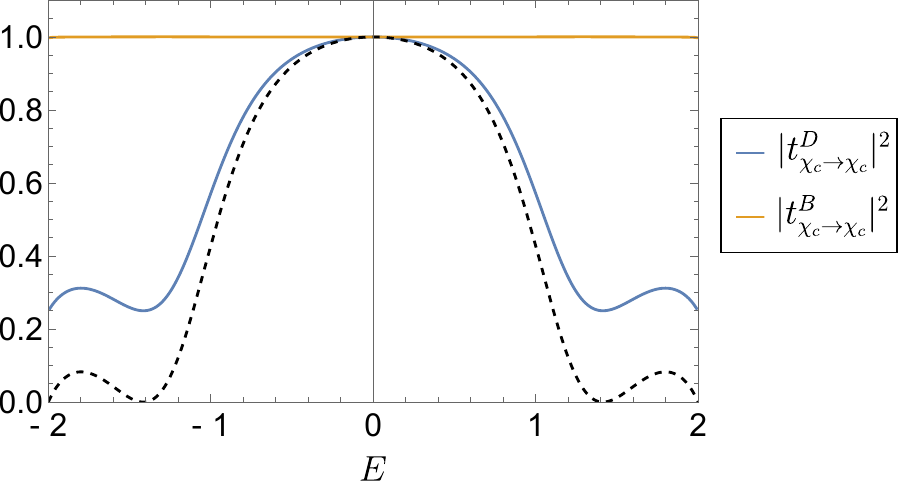}
    \caption[\texttt{AC(4)}: elastic transmission with confined bound state]{Elastic transmission probabilities for distinguishable particles and bosons in the presence of a confined bound state for the graph \texttt{AC(4)}. Dashed line: Single-particle transmission probability.}
    \label{fig:appendcyc4 t confined}
\end{figure}

Let us return to the evanescent case. In Fig.\ \ref{fig:appencyc4 t+r evanescent} a), we see that the sum $|r_{\chi\rightarrow\chi}^B|^2+|t_{\chi\rightarrow\chi}^B|^2$ is \textit{not} $1$ for every incoming momentum, although both processes account for at least $\sim80\%$ of the scattering at each energy. For high (low) incoming energy, we observe that the sum becomes $1$ for the positive (negative) energy bound state; we can interpret this as the impossibility of the particle in the evanescent bound state to lower its energy, since the other particle already has near to maximal (minimal) energy.

\begin{figure}[h]
    \centering
    \includegraphics[width=0.9\textwidth]{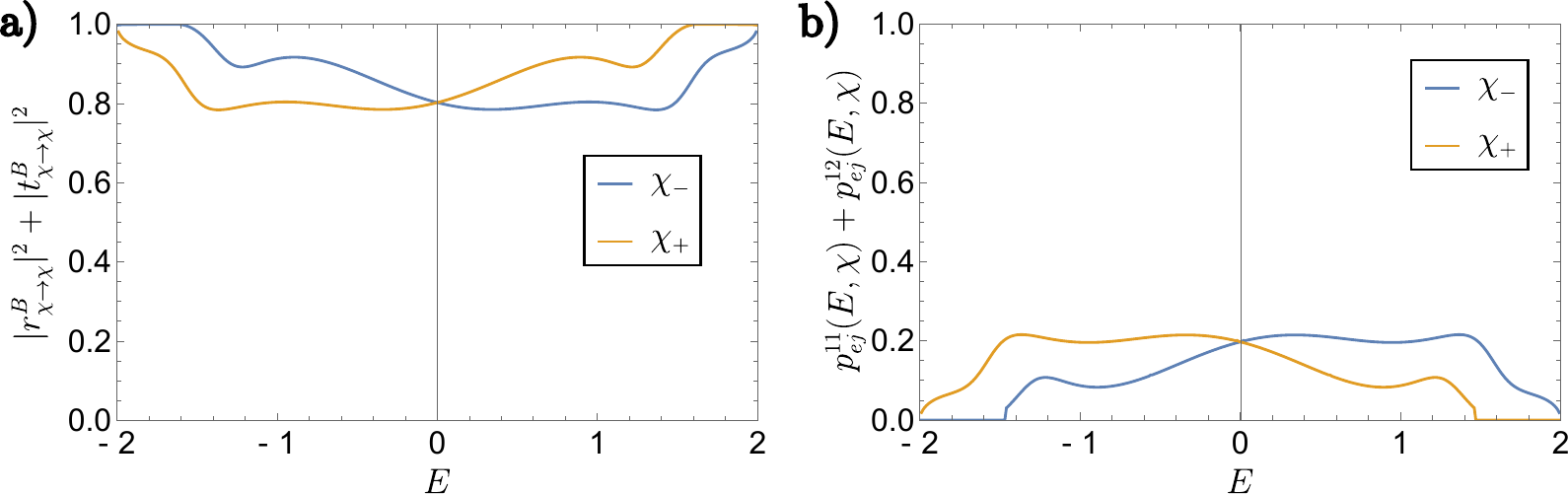}
    \caption[\texttt{AC(4)}: Sum of elastic transmission and reflection probabilities with evanescent bound state]{a) Sum of absolute value squared of  transmission and reflection coefficients in the presence of a evanescent bound state for the graph \texttt{AC(4)}. b) Sum of ejection probabilities of the evanescent bound state, when the scattered particle incomes in rail $1$ and scatters to rail $1$ or $2$ for the graph \texttt{AC(4)}.}
    \label{fig:appencyc4 t+r evanescent}
\end{figure}

So far we have only considered the elastic processes: $\ket{E^{n+}\chi}\rightarrow\ket{E^{m-}\chi}$; for energies where the sum of the probabilities is not $1$, there is another process happening. Let us analyze the \textit{ejection} probabilities of the bound particle: $\ket{E^{n+}\chi}\rightarrow\ket{\mathcal{E}_1^{m_1-}\mathcal{E}_2^{m_2-}}$.

We show in Fig.\ \ref{fig:appencyc4 t+r evanescent} b) the ejection probability in the bosonic case; together with the elastic scattering probabilities they sum up to $1$, and consequently, the full outgoing state corresponds to a combination of elastic scattering and ejection, with no amplitude for inelastic scattering.

Let us now consider the cycle graph with $8$ interacting vertices and rails on symmetric vertices, \texttt{C(8,5)} shown in Fig. \ref{fig:cycle105 single particle graph t r} a). This graph has $4$ different confined bound states $\ket{\chi^{c}_{1\pm}},\ket{\chi^{c}_{2\pm}}$ with energies $\pm E^c_{1}=\pm0.62$ and $\pm E^c_{2}=\pm1.62$, and other $4$ evanescent bound states $\ket{\chi^{ev}_{1\pm}},\ket{\chi^{ev}_{2\pm}}$ with energies $\pm E^{ev}_{1}=\pm2.03$ and $\pm E^{ev}_{2}=\pm2.17$. We also represent in Fig.\ \ref{fig:cycle105 single particle graph t r} the single-particle probabilities $|r(E)|^2$ and $|t(E)|^2$:

\begin{figure}[h]
    \centering
    \includegraphics[width=0.9\textwidth]{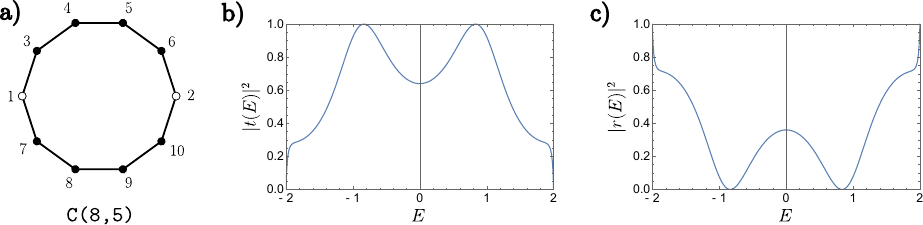}
    \caption[\texttt{C(8,5)}]{From left to right: The graph \texttt{C(8,5)}, the single particle transmission coefficient $|t(E)|^2$ and the single particle reflection coefficient $|r(E)|^2$ as a function of the energy of the incident particle.}
    \label{fig:cycle105 single particle graph t r}
\end{figure}

As before, let us consider the scattering of one incident particle with energy $E$ in the presence of a bound state in the graph, in the bosonic case. In this situation, the \textit{transmission} coefficients are null for \textit{every} incoming momenta and for \textit{any} confined state as well as the ejection probabilities. We now consider the elastic and \textit{inelastic} reflection probabilities. The latter process involves the excitation of the confined bound state to another one: $\ket{E^{n+}\chi}\rightarrow\ket{\mathcal{E}^{m-}\chi'}$.

\begin{figure}[h]
    \centering
    \includegraphics[width=0.9\textwidth]{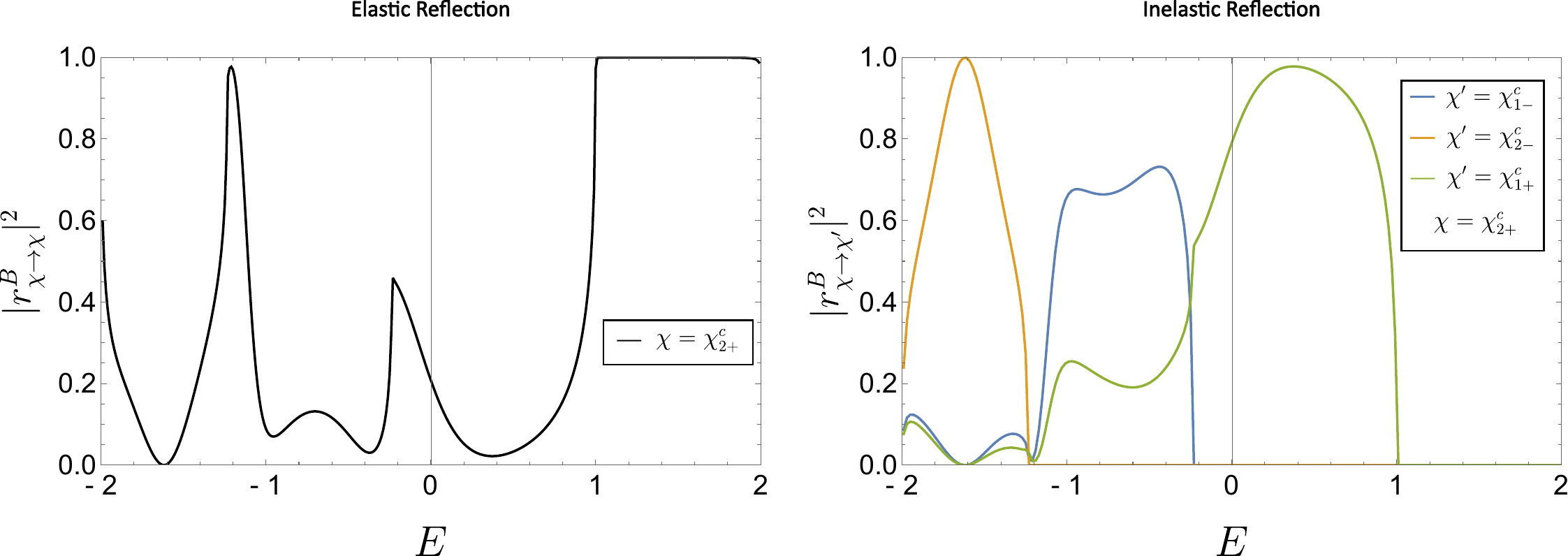}
    \caption[\texttt{C(8,5)}: Elastic and inelastic reflection probabilities with confined bound state]{Left: Probability for elastic reflection with the highest energy confined state $\chi^c_{2+}$ for the graph \texttt{C(8,5)}. Right: Probabilities for inelastic reflection to the other $3$ confined states.}
    \label{fig:cycle105 elastic inelastic r confined}
\end{figure}

In Fig.\ \ref{fig:cycle105 elastic inelastic r confined} we represent the probability of such processes; we chose the starting confined state $\ket{\chi^c_{2+}}$, but we have similar results for the other confined states; we note that these reflection probabilities, elastic and inelastic, sum up to $1$. This situation suggests another type of gadget, one for manipulating confined bound states; depending on the energy of the incoming particle, we can realize different superpositions of the confined bound states. Assuming these results are preserved for bigger cycles, such a gadget is \textit{efficiently scalable} on the number of confined states, since the symmetric cycle with $2N$ vertices has $N$ confined states.

We note that the elastic reflection coefficient is unity for scattering energies greater than $1$. This can be interpreted the same way as before; assuming that there is no inelastic scattering changing from confined to evanescent state, which is confirmed numerically, since the energy gap between the most energetic confined states is $E^c_{2+}-E^c_{1+}=1$, the confined bound state cannot transfer its energy to the incoming particle for $E>1$.

Now let us analyze the situation with evanescent bound states. In Fig.\ \ref{fig:cycle105 r and t evanescent dashed single} we plot the elastic reflection and transmission probabilities for all four evanescent states initialized on the graph. We note that each probability is more intensely modified by the presence of an evanescent particle with greater energy in absolute value. As before, we have $|r_{\chi\rightarrow\chi}^B|^2+|t_{\chi\rightarrow\chi}^B|^2\neq 1$, suggesting that the processes of inelastic reflection and transmission happen with nonzero amplitude.

\begin{figure}[H]
    \centering
    \includegraphics[width=0.9\textwidth]{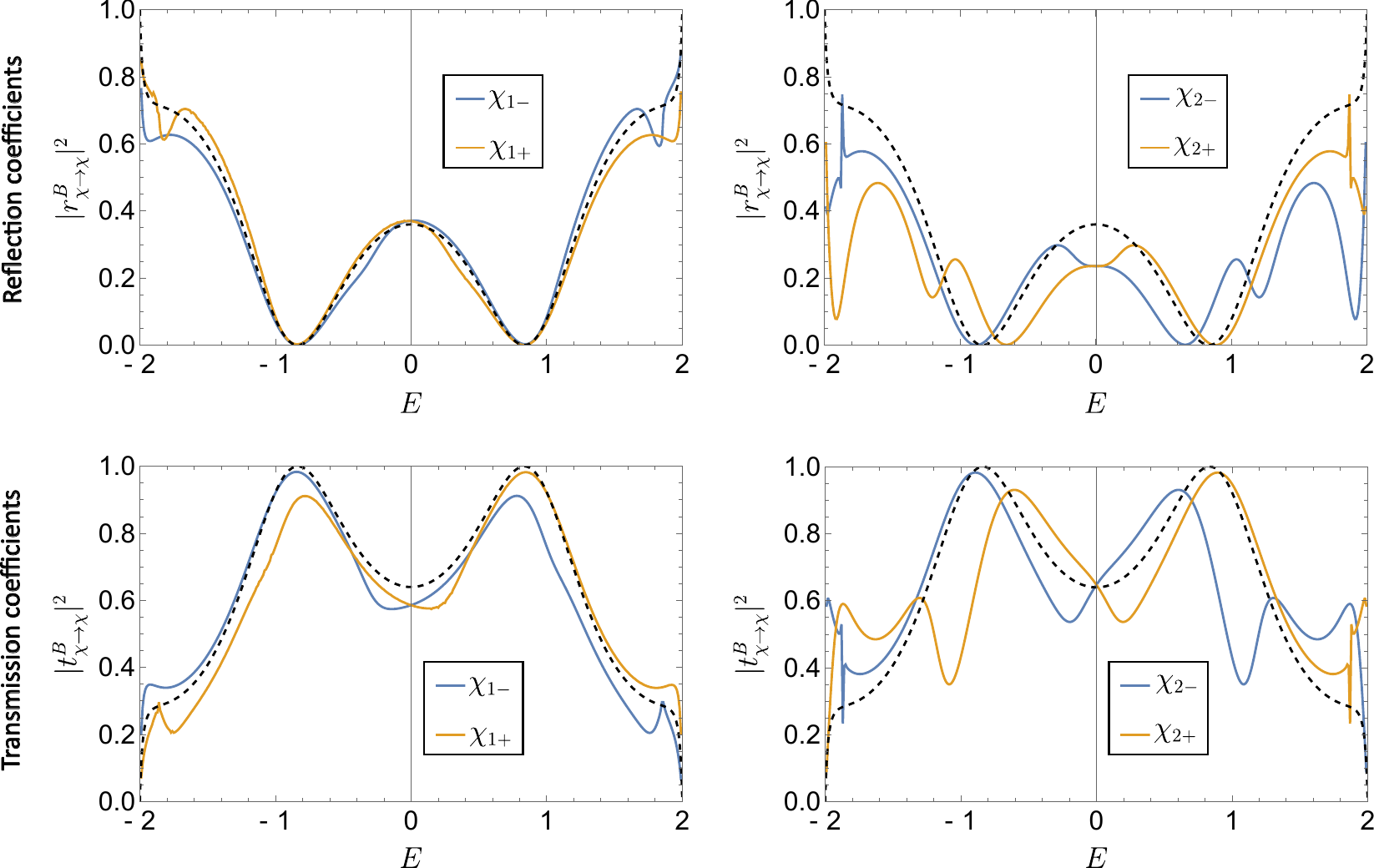}
    \caption[\texttt{C(10,5)}: Elastic reflection and transmission probabilities with evanescent bound state]{From left to right: Bosonic elastic scattering probabilities for states $\chi_{\pm1}$ and $\chi_{\pm2}$ respectively. Dashed line: Corresponding single-particle reflection/transmission probability.}
    \label{fig:cycle105 r and t evanescent dashed single}
\end{figure}

If we consider all such processes, the probabilities still \textit{do not} add up to $1$, but they account for at least $\sim90\%$ of the scattering at each energy, as seen Fig.\ \ref{fig:cycle105 elastic inelastic ejection legend} for the evanescent state with the highest energy $\ket{\chi^{ev}_{2+}}$. In this case, both ejection \textit{and} inelastic reflection/transmission happen simultaneously, in addition to the elastic processes. In Fig.\ \ref{fig:cycle105 elastic inelastic ejection legend} we also plot the probabilities for inelastic scattering and ejection; naturally, we observe that the probabilities for all of those processes do add up to $1$.

\begin{figure}[h]
    \centering
    \includegraphics[width=0.7\textwidth]{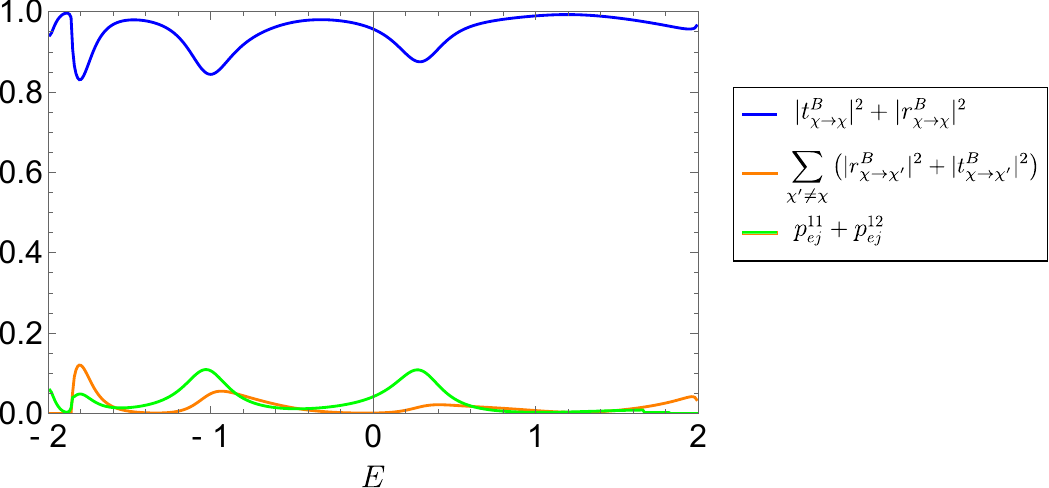}
    \caption[\texttt{C(10,5)}: Elastic and inelastic and ejection probabilities with evanescent bound state]{Probabilities for the elastic, inelastic and ejection processes with initialized bound state $\ket{\chi^{ev}_{2+}}$.}
\label{fig:cycle105 elastic inelastic ejection legend}
\end{figure}

The only processes we have not yet considered are the ones with two free particles in the input state: $\ket{E_1^{n_1+}E_2^{n_2+}}\rightarrow\ket{\mathcal{E}_1^{m_1-}\mathcal{E}_2^{m_2-}}$. These processes have negligible amplitude in the limit of narrow wavepackets in energy space; indeed, since both particles become very delocalized in position space, the interaction between them becomes negligible, and they scatter separately, performing single-particle scattering on the graph. We will get back to this claim in the next section, where we analyze the elastic scattering of two wavepackets with finite width.

\subsection{Traveling states scattering amplitudes}
\label{sec:4C}

We now focus on elastic two-particle scattering of asymptotically free particles; the plots in this section correspond to boson scattering amplitudes. This process depends on the values of the matrix elements $R_{{k_1^{m_1-}}k_2^{m_2-};{p_1^{n_1+}}p_2^{n_2+}}$; we also note that, since these values are proportional to an energy conservation delta factor, the only matrix elements that contribute to actual transition amplitudes satisfy $E_{p_1}+E_{p_2}=E_{k_1}+E_{k_2}$, called on-shell values; therefore we should not strongly believe that specific two-particle scattering properties could be related to off-shell values, since there are infinitely many smooth functions $R$ that have the same value on the support of the delta function. Furthermore, these amplitudes, or their absolute value squared, cannot be interpreted as probabilities directly; however, we expect that the general behavior of the scattering might be related to such amplitudes. Also, for simplicity, we will only plot the real part of $\mathcal{R}$.

Since we only consider graphs with two rails in this section, in addition to being symmetrical graphs with respect to exchanging both rails, let us make the following convention: for  $p\in(-\pi,0]$ we consider a particle incoming with momentum $p$ in rail $1$ and for $p\in(0,\pi]$ we consider a particle incoming with momentum $-p$ in rail $2$.

Let us first consider the \texttt{Line(M)} graph; this situation is analogous to the setting considered in \cite{brod2016a,brod2016b} with photons interacting with sequential quantum dots. The graphs below correspond to $M=39$. We show in Fig.\ \ref{fig:fluor line19 legend} density plots of the real part of $\mathcal{R}_{k_1^{m_1-}k_2^{m_2-};{p_1^{n_1+}}p_2^{n_2+}}$ for fixed incoming momenta, as a function of the outgoing momenta. We chose two pairs of incoming momenta, $p_1=\pm\frac{\pi}{4}$ and $p_2=\frac{\pi}{2}$, corresponding to co-propagating and counter-propagating particles --- other momentum pairs we considered did not present qualitatively differences. We can observe that momentum conservation manifests itself as high values concentrating around the line $k_1+k_2=p_1+p_2$. This behavior has been proved analytically in the limit of $M\rightarrow\infty$ in a previous work \cite{luna}, and it is not totally unexpected, because even though the off-shell values are not related \textit{directly} to the amplitudes of scattering, it is derived from the interaction considered, which is itself translation invariant in the limit of large $M$. 

\begin{figure}[htbp]
    \centering
    \includegraphics[width=0.9\textwidth]{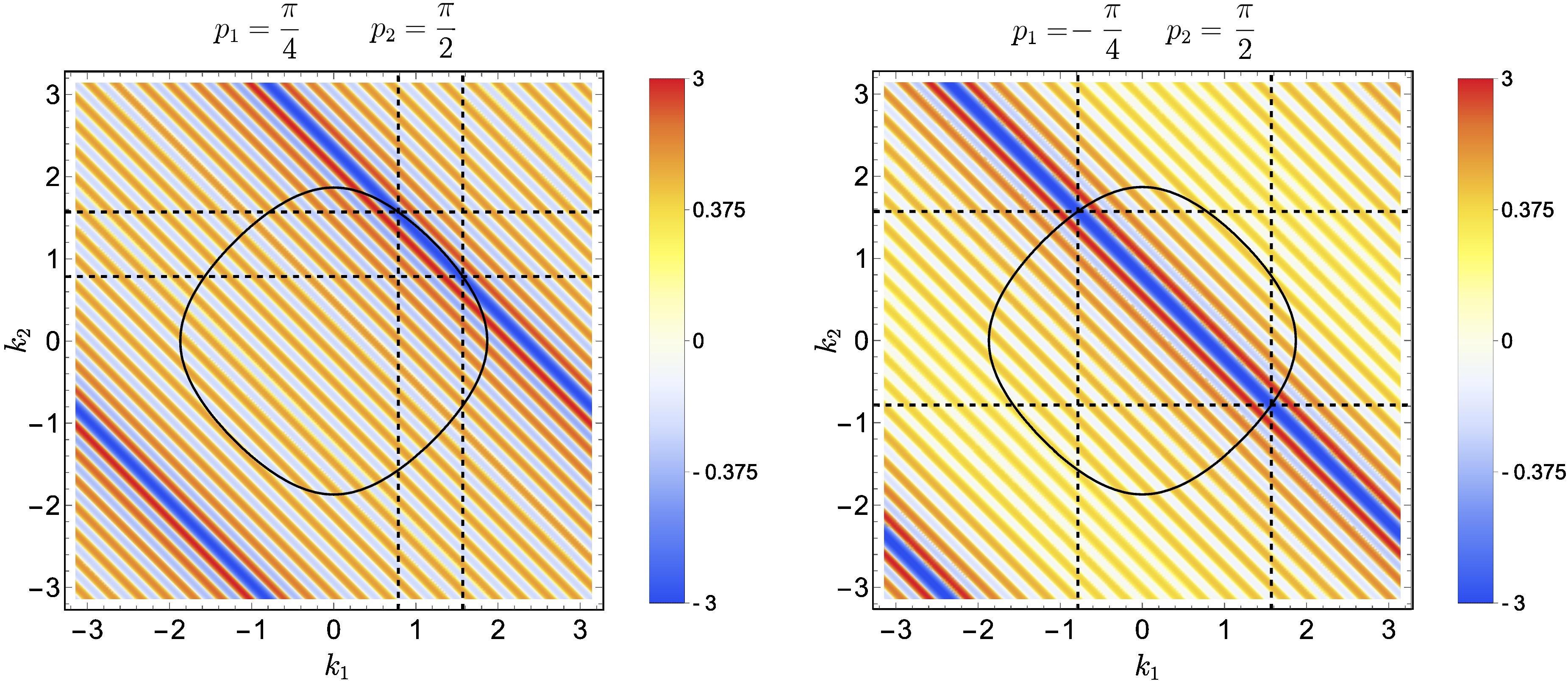}
    \caption[Plots of $\text{Re }R_{k_1k_2;p_1p_2}$ for \texttt{Line(39)}]{Plots of $\text{Re }R_{k_1k_2;p_1p_2}$ for \texttt{Line(39)} for momentum pairs $p_1=\pm\frac{\pi}{4}$, $p_2=\frac{\pi}{2}$ as a function of the outgoing momenta $k_1,k_2$. Left: Co-propagating particles. Right: Counter-propagating particles. Negative and positive values of $k$ represent the particle outgoing onto different rails. Dashed lines: Incoming momenta $p_1$ and $p_2$. Solid line: On-shell points, satisfying $2\cos{k_1}+2\cos{k_2}=2\cos{p_1}+2\cos{p_2}$. To make the structures of the graphs clearer, the color scaling was chosen proportional to $\sqrt[3]{x}$ for these and the next graphs.}
\label{fig:fluor line19 legend}
\end{figure}

In the linear graph, we let each momenta lie in $(-\pi,\pi)$, with the sign representing either a left or right moving particle. The dashed lines represent the values of $p_1$ and $p_2$; we can observe that individual momentum conservation, that is, high amplitudes around $(k_1,k_2)=(p_1,p_2)$ and $(k_1,k_2)=(p_2,p_1)$, manifests itself more rapidly in the counter-propagating case than in the co-propagating case; as we will see later, in the context of calculating cross-sections, the distinction between co- and counter-propagating particles is also observed in other graphs that do not have symmetry by permuting the rails.

\begin{figure}[htpb]
    \centering
    \includegraphics[width=\textwidth]{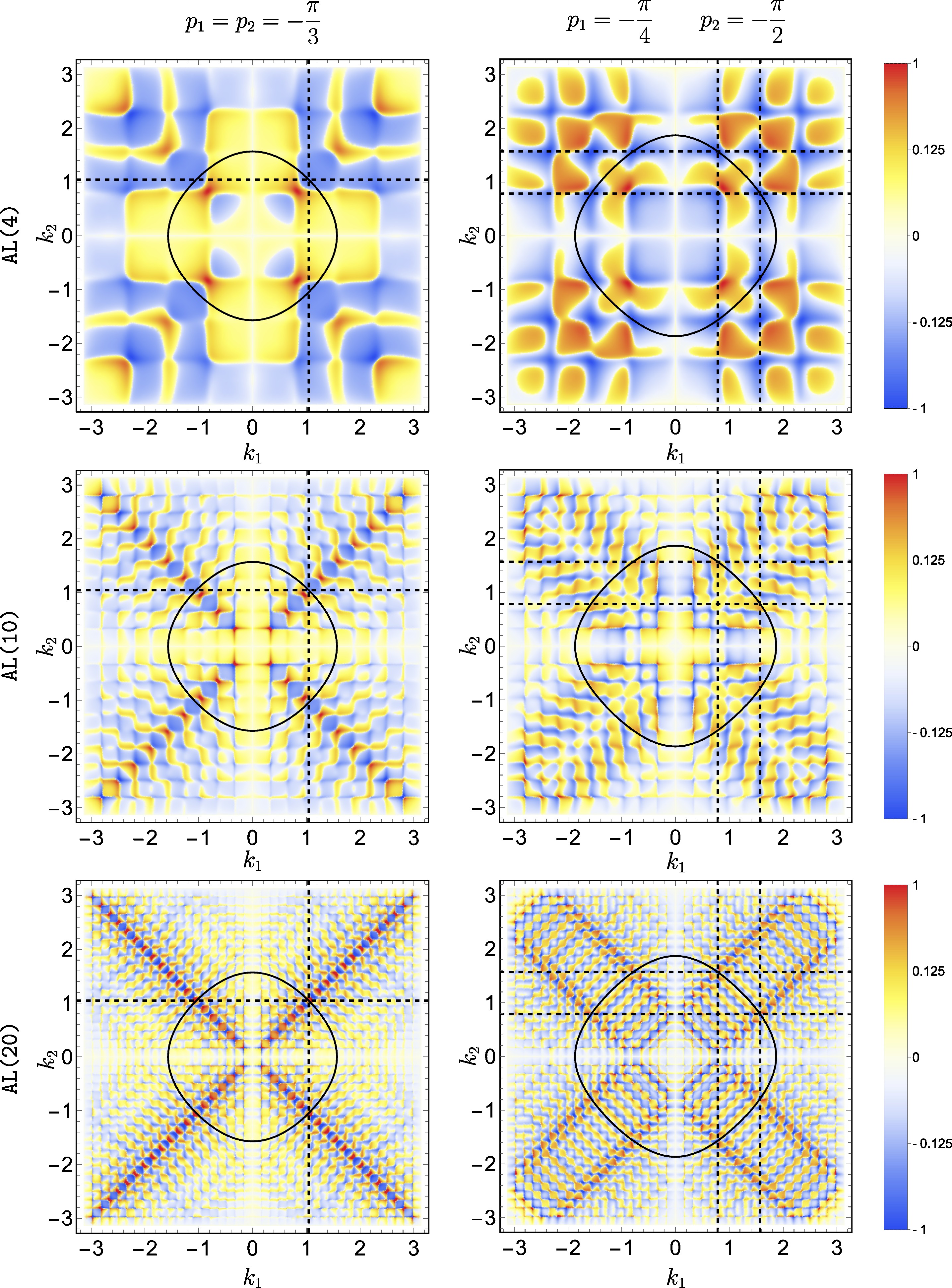}
    \caption[Plots of $\text{Re }R_{k_1k_2;p_1p_2}$ for \texttt{AL(4)}, \texttt{AL(10)} and \texttt{AL(20)}]{Plots of $\text{Re }\mathcal{R}_{k_1k_2;p_1p_2}$ for incoming momentum pairs $p_1=p_2=-\frac{\pi}{3}$ and $p_1=-\frac{\pi}{4}$, $p_2=-\frac{\pi}{2}$ as a function of the outgoing momenta $k_1,k_2$. Negative and positive values of $k$ represent the particle outgoing onto different rails. Dashed lines: Incoming momenta $p_1$ and $p_2$. Solid line: On-shell points, satisfying $2\cos{k_1}+2\cos{k_2}=2\cos{p_1}+2\cos{p_2}$.}
\label{fig:fluor append pi2 pi4 legend}
\end{figure}

Now we consider the appended linear graph \texttt{AL(M)} defined in Appendix \ref{apx:D}; this situation is analogous to another optical system, considered in \cite{appendlineoptics}. We plot in Fig.\ \ref{fig:fluor append pi2 pi4 legend} the amplitudes $\mathcal{R}$ for the pairs $p_1=p_2=-\frac{\pi}{3}$ and $p_1=-\frac{\pi}{4}$, $p_2=-\frac{\pi}{2}$. We can see a $4$-fold symmetry by changing the sign of each individual outgoing momentum; this is consequence of the amplitudes $\mathcal{R}$ depending only on the amplitudes of single-particle free states on the graph, which do not depend on the rail in which the particle is incoming, due to the symmetry of the graph with exchanging vertices $1$ and $2$. This symmetry also holds for the incoming momenta $p_1,p_2$.

We note that the particles have high probability of conserving their momenta, just like in the previous case, although the mechanism that enforces it appears to be different than that of the linear graph. Note that the off-shell values do \textit{not} concentrate on the line $k_1+k_2=p_1+p_2$, but there is still a concentration around the four lines $k_2-k_1=\pm (p_1-p_2)$ and $k_1+k_2=\pm (p_2-p_1)$, specially for bigger $N$. Even though this does not represent momentum conservation \emph{per se}, these equations together with energy conservation imply that either $(k_1,k_2)=(\pm p_1,\pm p_2)$ or $(k_1,k_2)=(\pm p_2,\pm p_1)$, and thus conserving each individual momentum. We may think that momentum conservation arises in the appended linear graph by virtue of the region of interaction becoming more and more translation invariant, in some sense.

In Fig.\ \ref{fig:fluor append pi2 pi4 legend} we can observe a periodic grid pattern with period $\frac{\pi}{M}$ for each graph \texttt{AL(M)} considered. This suggests that particles with such momenta have a more pronounced effect for scattering. To look for these effects, we will consider only the on-shell values of $\mathcal{R}$, on top of the energy conservation curve, which we call $\mathcal{R}|_\mathcal{C}$. Let us now consider the graph \texttt{AL(21)}; in Fig.\ \ref{fig:energyexc append21} we plot the real part of $\mathcal{R}|_{\mathcal{C}}$ as a function of the energy difference $\Delta E=2\cos{k_1}-2\cos{p_1}$ for the input pairs $p_1=-\frac{\pi}{3}$, $p_2=-\frac{\pi}{7}$ and $p_1=-\frac{\pi}{4}$ and $p_2=-\frac{\pi}{8}$. We note that, in general, the appended linear graph favors momentum conservation for specific pairs: we observe narrow peaks for momentum conservation when each incoming momentum is a fraction of $\pi$ whose denominator divides the \textit{number of vertices} of the graph.

\begin{figure}[htbp]
    \centering
    \includegraphics[width=\textwidth]{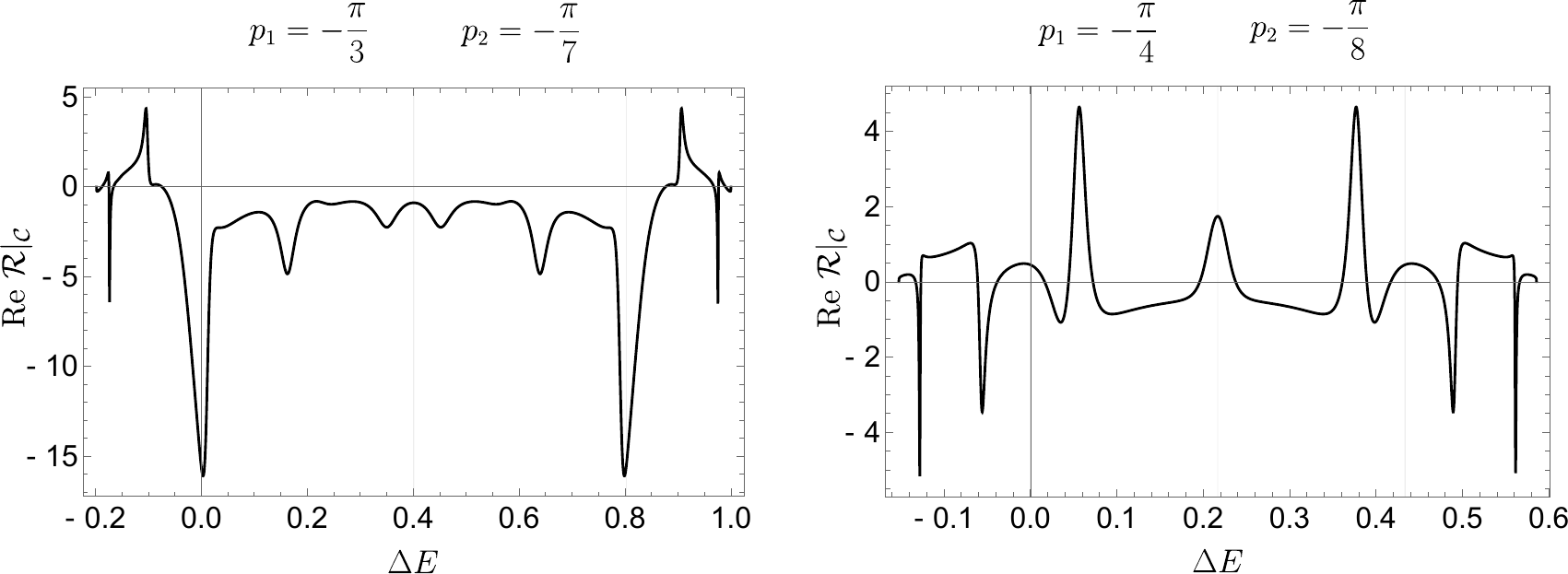}
    \caption[Plots of $\text{Re }\mathcal{R}_{k_1k_2;p_1p_2}$ on top of the energy conservation curve for the graph \texttt{AL(21)}]{Plots of $\text{Re }\mathcal{R}_{k_1k_2;p_1p_2}$ on top of the energy conservation curve for incoming momentum pairs $p_1=-\frac{\pi}{3}$, $p_2=-\frac{\pi}{7}$ and $p_1=-\frac{\pi}{4}$ and $p_2=-\frac{\pi}{8}$ as a function of the energy transferred to the first particle for the graph \texttt{AL(21)}. Grey vertical lines: Left: Point where the energies are equal; Right: Point where the energies are swapped.}
\label{fig:energyexc append21}
\end{figure}

On the left graph we observe narrow peaks at $\Delta E=0$ and $\Delta E=2\cos{\frac{\pi}{7}}-2\cos{\frac{\pi}{3}}$; the first peak corresponds to each particle conserving their individual momenta $(k_1,k_2)=(p_1,p_2)$ and the second peak corresponds to the particles swapping their momenta $(k_1,k_2)=(p_2,p_1)$. For clarity, we represented a gray line where the particles swap their energy, and another line in the midpoint. On the other hand, the right graph does not favor conservation of momentum; for this choice of incoming momenta there is no interesting apparent structure.

\subsection{Two-particle cross section}
\label{sec:4D}

The quantities we have considered so far for studying two-particle scattering do not allow us to unambiguously compare two different given graphs; we sought a new quantity that measures the degree of the interaction between the particles. Before we begin, note that since the S-matrix $S=\mathds{1}+R$ must be unitary, $R$ must satisfy a condition known as the optical theorem:
\begin{equation}
\begin{aligned}
S^\dagger S&=\mathds{1}\\
\rightarrow R+R^\dagger&=-R^\dagger R
\label{opttheo}
\end{aligned}
\end{equation}
On a side note, the optical theorem is usually written in momentum basis, inserting a resolution of the identity between the product in the right-hand side. However, note that, since the free Hamiltonian allows \textit{single-particle} bound states, we need to consider matrix elements of the kind $\ket{\chi{q^n}^+}\bra{\chi{q^n}^+}$ in the resolution of the identity. The complete resolution of the identity in the domain of the S-matrix reads \footnote{We could choose either sign to compute the identity, since $\ket{p^{n+}}$ and $\ket{p^{n-}}$ span the same subspace.} \footnote{If we consider bosons, we should restrict the double integral to $p_1\leq p_2$ whenever $n_1=n_2$.}
\begin{equation}
\begin{aligned}
\mathds{1}=\int_{-\pi}^0\int_{-\pi}^0\sum_{n_1,n_2}\ket{{q_1^{n_1}}^-{q_2^{n_2}}^-}&\bra{{q_1^{n_1}}^-{q_2^{n_2}}^-}dq_1 dq_2+\int_{-\pi}^0\sum_{n,\chi}\ket{\chi{q^n}^-}\bra{\chi{q^n}^-}+\ket{{q^n}^-\chi}\bra{{q^n}^-\chi}dq,
\end{aligned}
\end{equation}
then, the optical theorem read in momentum eigenbasis is written as
\begin{equation}
\begin{aligned}
R_{k_1^{n_1+} k_2^{n_2+};p_1^{m_1+} p_2^{m_2+}}&+R^*_{p_1^{m_1+} p_2^{m_2+};k_1^{n_1+} k_2^{n_2+}}\\
=&-\int_{-\pi}^0\int_{-\pi}^0\sum_{l_1,l_2}R^*_{q_1^{l_1-} q_2^{l_2-};k_1^{n_1+} k_2^{n_2+}}R_{q_1^{l_1-} q_2^{l_2-};p_1^{m_1+} p_2^{m_2+}}dq_1 dq_2\\
&-\int_{-\pi}^0\sum_{l,\chi}R^*_{q^{l-} \chi;k_1^{n_1+} k_2^{n_2+}}R_{q^{l-} \chi;p_1^{m_1+} p_2^{m_2+}}dq\\
&-\int_{-\pi}^0\sum_{l,\chi}R^*_{\chi q^{l-};k_1^{n_1+} k_2^{n_2+}}R_{\chi q^{l-};p_1^{m_1+} p_2^{m_2+}}dq
\end{aligned}
\end{equation}


To benchmark our calculations, we have tested this identity numerically to $10^{-3}$ relative precision for the graphs and momenta considered, by using $\epsilon=10^{-3}$. Choosing smaller values of $\epsilon$ does not  qualitatively change the results, at the cost of being more computationally expensive.

Let us now define the two-particle cross section. Since each particle already scatters on the graph on its own, we want to measure how much the scattering \textit{changes} by the presence of the other particle. Consider the following quantity:
\begin{equation}
\begin{aligned}
\Sigma_{\delta}(E_1^{n_1},E_2^{n_2}):&=||S\ket{\psi_{\delta}(E_1^{n_1},E_2^{n_2})}-\ket{\psi_{\delta}(E_1^{n_1},E_2^{n_2})}||^2\\&=\bra{\psi_{\delta}(E_1^{n_1},E_2^{n_2})}R^\dagger R\ket{\psi_{\delta}(E_1^{n_1},E_2^{n_2})},
\label{2p cross section}
\end{aligned}
\end{equation}
where $\ket{\psi_{\delta}(E_1^{n_1},E_2^{n_2})}$ is a product of two square wavepackets centered about $E_1$ and $E_2$, each with width $2\delta$, on rails $n_1$ and $n_2$ respectively. Note that $0\leq\Sigma_\delta\leq 4$, since it is the norm of a difference of two normalized vectors. Applying the optical theorem, we have\footnote{Again, if we consider bosons, we should restrict the double integral to $\mathcal{E}_1\leq \mathcal{E}_2$ and $\mathcal{E'}_1\leq \mathcal{E'}_2$ if $n_1=n_2$.}
\begin{equation}
\begin{aligned}
\Sigma_{\delta}(E_1^{n_1},E_2^{n_2})&=-2\,\text{Re}\bra{\psi_{\delta}(E_1^{n_1},E_2^{n_2})}R\ket{\psi_{\delta}(E_1^{n_1},E_2^{n_2})}\\
&=-\frac{2}{4\delta^2}\text{Re}\int_{E_1-\delta}^{E_1+\delta}\int_{E_2-\delta}^{E_2+\delta}\int_{E_1-\delta}^{E_1+\delta}\int_{E_2-\delta}^{E_2+\delta} R_{\mathcal{E}_1^{n_1+} \mathcal{E}_2^{n_2+};\mathcal{E'}_1^{n_1+} \mathcal{E'}_2^{n_2+}} d\mathcal{E}_1d\mathcal{E}_2d\mathcal{E}'_1d\mathcal{E}'_2\\
&=:-2\text{ Re }I(E_1^{n_1},E_2^{n_2})
\end{aligned}
\end{equation}

Note that, since $R$ has one Dirac delta factor, the full integral can be written as a triple integral of a continuous function on a volume proportional to $\delta^3$, which renders $\Sigma_{\delta}(E_1^{n_1},E_2^{n_2})=O(\delta)$ for $\delta\rightarrow0$. This implies that the interaction becomes negligible for very narrow wavepackets. In the setting of non-linear optics, in \cite{brod2016a}, Brod and Combes discuss how for any finite-sized region there is a value of wavepacket (spectral) width that is small enough for the nonlinear effect to disappear. This is corroborated by results from our previous paper \cite{luna} and is a manifestation of the cluster decomposition principle, discussed in \cite{clusterdecprinc}.

Also, since $\ket{\psi_{\delta}(E_1^{n_1},E_2^{n_2})}$ represents incoming particles, the amplitude inside the integral relates incoming ($\mathcal{E'}_1^{n_1+} ,\mathcal{E'}_2^{n_2+}$) states to \textit{incoming} ($\mathcal{E}_1^{n_1+} ,\mathcal{E}_2^{n_2+}$) states; this situation is different than the one considered in the last section, where we were interested in transition amplitudes from incoming to outgoing free states. Note also that, since the matrix elements of $R$ in the bosonic case are double that of distinguishable particles for the Hubbard interaction, typically this will also be the case for the cross section, except when $|E_2-E_1|\leq\delta$ and $n_1=n_2$.

This integral is similar to the fidelity computed in \cite{luna} in the linear graph setting; both the fidelity $F(\phi)$ of the CPHASE gate and the two-particle cross section defined here depend on the same integral $I$, with the only difference that we considered wavepackets in \textit{momentum} space before. In the former situation, the fidelity with a CZ \footnote{The controlled Z gate is a CPHASE with phase $-1$.} is maximal if the integral evaluates to $I=-2$, which corresponds to maximal two-particle cross section $\Sigma_\delta=4$.

\begin{figure}[htbp]
    \centering
    \includegraphics[width=0.5\textwidth]{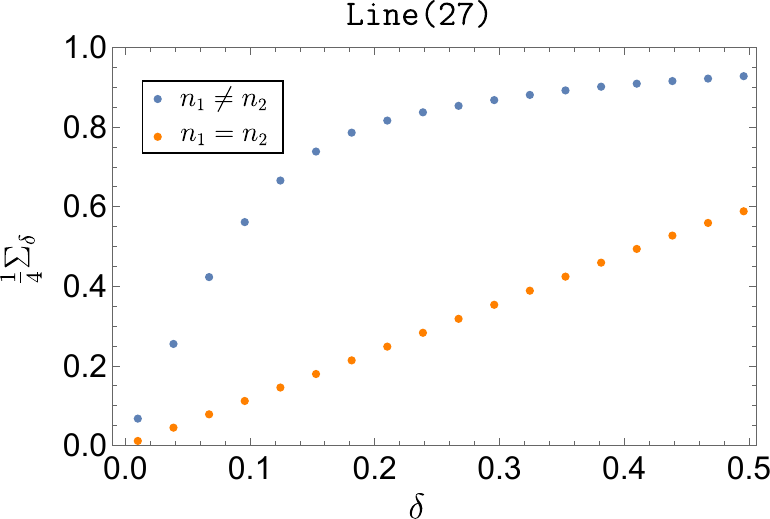}
    \caption[Two-particle cross section for \texttt{Line(27)}]{Bosonic two-particle cross section $\Sigma_\delta(E_1^{n_1},E_2^{n_2})$ for the linear graph \texttt{Line(27)} with $E_1=0$ and $E_2=\sqrt{2}$ for counter- and co-propagating particles.}
\label{fig:cross section line 27 co counter legend}
\end{figure}

Let us analyze the two-particle cross section for some graphs. The usefulness of such a quantity is that it summarizes all effects of the interaction between the scattering particles in a single figure of merit. Going back to the linear graph, we see that the cross section is already close to maximal in the counter-propagating case with few interaction sites, as we see in Fig.\ \ref{fig:cross section line 27 co counter legend}. This is equivalent to saying that the linear graph is approximating the CZ gate as in the setting of \cite{luna}. Here and in the next example, we observe that counter-propagating particles achieve higher cross sections, and in some sense, have higher chance of interacting.

\begin{figure}[htbp]
    \centering
    \includegraphics[width=\textwidth]{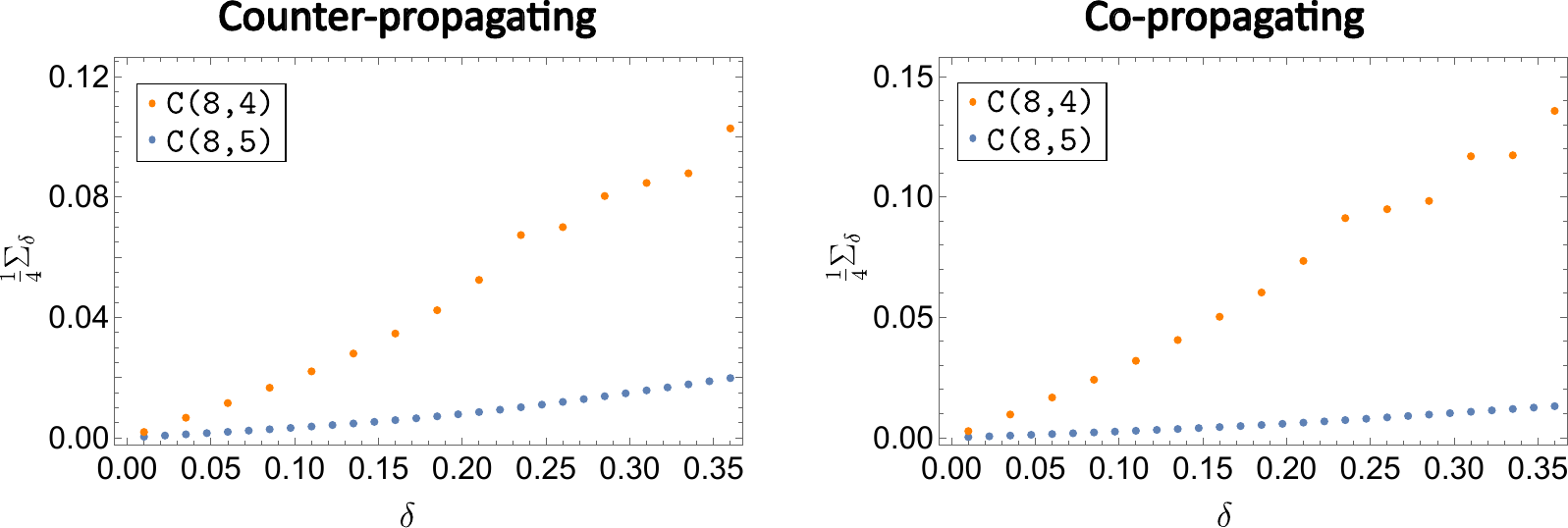}
    \caption[Two-particle cross sections for \texttt{C(8,5)} and \texttt{C(8,4)}]{Two-particle cross section $\Sigma_\delta(E_1^{n_1},E_2^{n_2})$ for the cycles \texttt{C(8,5)} and \texttt{C(8,4)}, with $E_1=0$, $E_2=\sqrt{2}$. Left: Counter-propagating ($n_1\neq n_2$). Right: Co-propagating ($n_1= n_2$) .}
\label{fig:cross section C84 C83 final}
\end{figure}

Now let us turn our attention to other graphs; in Fig.\ \ref{fig:cross section C84 C83 final} we plot the bosonic two-particle cross section for the symmetric cycle \texttt{C(8,5)} and the asymmetric cycle \texttt{C(8,4)}. We had the intuition that big symmetric cycles would behave similarly to the long line \texttt{Line(M)} asymptotically, because the particles would spend most of the time traveling on either half of the cycle. However, we observe that the asymmetric graph \texttt{C(8,4)} favors larger cross sections; in this case, we have an increase of \textit{one order of magnitude}. This would imply that, to build a gadget that exploits free particle scattering, we might want to look for asymmetric graphs. We also note that asymmetry may imply a smaller number of bound states on the graph, as seen in Table \ref{tab:bound}, which may be related to larger cross sections; we leave further investigation of this connection as an open question. Every graph we considered, except for the line, has cross section of order $10^{-1}$ to $10^{-2}$; it would be interesting to find other examples of graphs attaining high cross sections.

We note that the two-particle cross section takes into account both the elastic scattering of the free particles and the time reversal of the ejection, i.e., the capture of one of the free particles by the graph. In the last section, for the graphs considered, we have seen that ejection seems to account for only a small part of the full scattering; this may be a consequence of the short range of the interaction we considered.

\section{Conclusion} \label{sec:conclusion}

Our systematic approach to studying two-particle scattering on graphs lays the groundwork for further development in several directions. In a more practical point of view, we can apply these new tools to different physical systems; the quantum-walk Hamiltonian of a particle on a graph is analogous to the tight-binding model in condensed matter physics, and our results could be applied to studying properties of electron transport in more general lattices, such as the hexagonal lattice of graphene; in optics, for example, we could apply quantum walk with the Hubbard interaction on specified vertices as a toy model to study photons interacting with a quantum dot network --- we found strong similarities of scattering properties of simple families of graphs and the optical systems considered in \cite{brod2016a,brod2016b,appendlineoptics}.

On the other hand, we believe these results can shed new light on studying new graph theoretical properties in general; for example, some frequently studied properties in general single-particle walks are \textit{hitting times} --- the average time it takes for a particle to traverse between two given vertices --- and \textit{mixing times} --- the average time it takes for the distribution to become stationary. We can ask similar questions in \textit{two-particle} quantum walks: 
\begin{itemize}
    \item \textbf{Two-particle hitting times:} What is the average time it takes for a particle to traverse a given graph when a second particle is in some bound state? What is the average time it takes for two particles initialized in some inner vertex to escape the graph? Some of our results in section \ref{sec:4B} are similar to \textit{infinite hitting times} discussed in \cite{inftyhitting}.
    \item \textbf{Two-particle mixing times:} How long does it take for the bound particle to change to a different bound state in an inelastic scattering process? In opposition to the dynamics on finite graphs, scattering on graphs with rails is \textit{not} periodic; it is possible to define asymptotic mixing times.
\end{itemize}

In the perspective of quantum computation, we can explore new possible gadgets for quantum algorithms based on graphs, or more generally, quantum tasks. The graphs described in section \ref{sec:4B} already exhibit many new properties that could be interesting: We can describe \textit{how} the presence of a bound state on part of a graph can affect the trajectory of another particle traversing it. The graph \texttt{AC(4)} allows for a \textit{conditional momentum filter}, a gadget
that allows only a small range of the momentum wavepacket to be transmitted, depending on
the energy of the evanescent bound particle present on the graph. In addition, by considering confined bound states, the same graph can serve as a kind of \textit{transistor}, making a particle perfectly transmit or reflect through the graph depending on the presence of the bound particle. Also, even though we do not expect it is possible to construct a finite-sized graph which implements a perfect CPhase gate between two traveling particles \cite{clusterdecprinc}, it is still open whether it is possible to intermediate an entangling gate using bound states; we leave this as an interesting open question.

In the perspective of complexity theory, we opened the way to answer some open questions: since multi-particle interacting quantum walks are universal for quantum computing \cite{singleparticleqw,multiparticleqw}, can a restricted form of quantum walks have intermediate computational power?
\begin{enumerate}
    \item \textbf{No interaction}: In analogy to the boson-sampling model \cite{bossamp}, are quantum walks with non-interacting bosons hard to simulate classically? Expanding the results in \cite{FLO}, can we show that quantum walks with non-interacting \textit{fermions} are classically simulable?
    \item \textbf{Restricting momentum}: Are quantum walks where all particles have the same initial momentum universal for quantum computing? In the original construction it is necessary to have particles with two different momenta interacting.
    \item \textbf{Restricting gadgets}: Is there a restricted set of gadgets which makes quantum walks non-universal? In the original construction, the gadget which implements the Hadamard gate has maximal degree $4$, therefore restricting the graph's maximum degree to $3$ could render quantum walks non-universal.
    \item \textbf{Bound states as resources}: Given a non-universal set of gadgets, can initializing the graph with some number of bound particles make it universal?
\end{enumerate}

We applied our method only to the Hubbard model, but the formalism is easily applicable to other multi-particle potentials.  

\begin{acknowledgments}
This work was supported by Brazilian funding agencies CNPq and FAPERJ.
\end{acknowledgments}

\bibliography{References}

\appendix

\section{Single-particle S-matrix}
\label{apx:A}
In this section, we compute an analytic expression for the single-particle S-matrix.

Starting from Eq.\ (\ref{eq:singleparticleS}), we can write the factor relating to the interaction as:
\begin{equation}
\begin{aligned}
\bra{p^m}V\ket{{p^n}^+}=\sum_{u\in V(G)}\dfrac{z-z^{-1}}{\sqrt{2\pi}}\bra{m}H_G\ket{u}\braket{u|{p^n}^+},
\label{pVp}
\end{aligned}
\end{equation}
since $\ket{{p^m}}$ only overlaps with the graph $G$ at the vertex $\ket{m}=\ket{1,m}$.

The terms $\braket{u|{p^n}^+}$ can be found by taking the inner product of the Lippmann-Schwinger equation with $\bra{u}$ for $u\in V(G)$:
\begin{equation}
\begin{aligned}
\braket{u|{p^n}^+}&=\braket{u|{p^n}}+\sum_{v\in V(G) }\bra{v}H_G\ket{{p^n}^+}\left(\sum_{t=1}^N\int_{-\pi}^0\dfrac{\braket{u|k^t}\braket{k^t|v}}{E-2\cos{k}+i\epsilon}dk+\sum_{w\in G^0}\dfrac{\braket{u|w}\braket{w|v}}{E+i\epsilon}\right)\\
&=:\dfrac{e^{-ip}-e^{ip}}{\sqrt{2\pi}}\delta_{un}+\sum_{v\in V(G)} J_{uv}\bra{v}H_G\ket{{p^n}^+},
\label{1plipp}
\end{aligned}
\end{equation}
where we used the resolution of the identity $\mathds{1}=\sum_{t=1}^N\int_{-\pi}^0 \ket{k^t}\bra{k^t}dk+\sum_{w\in G^0}\ket{w}\bra{w}$. Changing the integration variable to $\omega=e^{ik}$ and writing $E=z+z^{-1}$, we have
\begin{equation}
\begin{aligned}
J_{uv}&=\sum_{t=1}^N\dfrac{1}{2}\oint\dfrac{-\delta_{ut}\delta_{tv}(\omega^2+\omega^{-2}-2)}{E-\omega-\omega^{-1}+i\epsilon}\dfrac{d\omega}{2\pi i\omega}+\sum_{w\in G^0}\dfrac{\braket{u|w}\braket{w|v}}{E+i\epsilon}\\
&=\delta_{uv}\mathds{1}_{u\in\partial G}\oint\dfrac{\omega^2-1}{(\omega-\omega^-)(\omega-\omega^+)}\dfrac{d\omega}{2\pi i}+\dfrac{\delta_{uv}\mathds{1}_{u\notin\partial G}}{E+i\epsilon}\\
&=\omega^-\delta_{uv}\mathds{1}_{u\in\partial G}+\dfrac{\delta_{uv}\mathds{1}_{u\notin\partial G}}{z+z^{-1}+i\epsilon},
\end{aligned}
\end{equation}
where $\mathds{1}_\mathcal{C}=1$ if condition $\mathcal{C}$ is true and $0$ otherwise, and 
$$\omega^\pm:=\frac{1}{2}\left(E+i\epsilon\pm\sqrt{E+2+i\epsilon}\sqrt{E-2+i\epsilon}\right)$$
are the roots of the denominator of the integrand\footnote{We choose the principal branch of the square root.}. We also used the fact that $\oint f(\omega)\frac{d\omega}{\omega}=\oint f(\omega^{-1})\frac{d\omega}{\omega}$ for integrals on the unit circle. We have $|\omega^-|<1<|\omega^+|$, so the  only pole inside the contour is $\omega^-$. Moreover, the roots of $E-\omega-\omega^{-1}$ are exactly $z$ and $z^{-1}$, so in the limit $\epsilon\searrow 0$, we have $\omega^\pm\rightarrow z^{\mp 1}$ and the representation of the product $H_G J$ in position basis is given by the matrix\footnote{Since $p\in(-\pi,0)$, the imaginary part of $z$ is negative, so it corresponds to $\omega^-$.}:
\begin{equation}
\begin{aligned}
H_G J=\begin{bmatrix}
    \omega^-A & \omega^-B^\dagger \\
    \dfrac{B}{z+z^{-1}+i\epsilon} & \dfrac{D}{z+z^{-1}+i\epsilon}
\end{bmatrix}.
\end{aligned}
\end{equation}

Defining $T=\mathds{1}-H_G J$, we can rewrite \eqref{1plipp} as
\begin{equation}
\begin{aligned}
\braket{u|p;n^+}=\sum_{v\in V(G)}T^{-1}_{uv}\dfrac{e^{-ip}-e^{ip}}{\sqrt{2\pi}}\delta_{vn}=\dfrac{z^{-1}-z}{\sqrt{2\pi}}T^{-1}_{un},
\label{scp}
\end{aligned}
\end{equation}
where the inverse of $T$ can be computed as a block matrix inverse
\begin{equation}
\begin{aligned}
T^{-1}&=\begin{bmatrix}
    Q_{\epsilon}(z)^{-1} & *  \\
    (z+z^{-1}+i\epsilon-D)^{-1}BQ_{\epsilon}(z)^{-1}   & *
\end{bmatrix},
\label{T}
\end{aligned}
\end{equation}
and $*$ are unimportant matrices, since we only need the first $N$ columns of $T^{-1}$, and $Q_{\epsilon}(z)=\mathds{1}-\omega^-A-\omega^-B^\dagger(z+z^{-1}+i\epsilon-D)^{-1}B$.

Going back to Eq.\ \eqref{pVp}:
\begin{equation}
\begin{aligned}
\bra{p^m}V\ket{{p^n}^+}&=\dfrac{z-z^{-1}}{\sqrt{2\pi}}\sum_{u\in V(G)}\bra{m}H_G\ket{u}(T^{-1})_{un}\dfrac{z^{-1}-z}{\sqrt{2\pi}}\\
&=-\dfrac{1}{2\pi}(z-z^{-1})^2\left(\begin{bmatrix}
    A & B^\dagger  \\
    B & D
\end{bmatrix} T^{-1}\right)_{mn}.
\end{aligned}
\end{equation}
Finally the S-matrix at fixed momentum is
\begin{equation}
\begin{aligned}
\mathbb{S}(z)&=\lim_{\epsilon\searrow 0}-\mathds{1}+\dfrac{2\pi}{z-z^{-1}}\dfrac{(z-z^{-1})(z^{-1}-z)}{2\pi}(A+B^\dagger(z+z^{-1}+i\epsilon-D)^{-1}B)Q_{\epsilon}(z)^{-1}\\
&=\lim_{\epsilon\searrow 0}-Q_{\epsilon}(z)Q_{\epsilon}(z)^{-1}+(z^{-1}-z)(A+B^\dagger(z+z^{-1}+i\epsilon-D)^{-1}B)Q_{\epsilon}(z)^{-1}\\
&=\lim_{\epsilon\searrow 0}-\left((\mathds{1}+(z-\omega^--z^{-1})(A+B^\dagger(z+z^{-1}+i\epsilon-D)^{-1}B)\right)Q_{\epsilon}(z)^{-1}\\
&=-Q(z^{-1})Q(z)^{-1},
\end{aligned}
\end{equation}
where $Q(z)=\mathds{1}-zA-zB^\dagger(z+z^{-1}-D)^{-1}B$. This process may render the matrix $Q$ not invertible for some values of $z$; for a discussion see \cite{varbanov}. This is exactly the result described in \cite{levinsonthm2}, obtained by a different method. 

Looking back to equation \eqref{scp}, we note that the amplitudes of the scattering state going in through rail $n$ on the graph is the $n$-th column of the matrix
\begin{equation}
\begin{aligned}
\frac{1}{\sqrt{2\pi}}\begin{bmatrix}
    (z^{-1}-z)Q(z)^{-1} \\
    (z^{-1}-z)(z+z^{-1}-D)^{-1}BQ(z)^{-1}
\end{bmatrix}&=\frac{1}{\sqrt{2\pi}}\begin{bmatrix}
    z^{-1}\mathds{1}+z\mathbb{S}(z) \\
    (z+z^{-1}-D)^{-1}B(z^{-1}\mathds{1}+z\mathbb{S}(z))
\end{bmatrix},
\end{aligned}
\end{equation}
i.e., the boundary amplitudes are $\braket{m|p^{n+}}=(z^{-1}\delta_{mn}+z\mathbb{S}(z)_{mn})/\sqrt{2\pi}$ for $m\in\partial G$, and the inner amplitudes $\braket{u|{p^n}^+}$ for $u\in G^0$ is the $(u,n)$-matrix element of
\begin{equation}
\begin{aligned}
\frac{1}{\sqrt{2\pi}}\Psi(z):=\frac{1}{\sqrt{2\pi}} (z+z^{-1}-D)^{-1}B(z^{-1}\mathds{1}+z\mathbb{S}(z)).
\label{Psi}
\end{aligned}
\end{equation}

We have considered until now a scattering state that, for $t\rightarrow-\infty$, is incoming on rail $n$. Computing the amplitude of the scattering states which are outgoing on rail $n$ for $t\rightarrow\infty$ is similar, except for the values of $\omega^\pm$. It is easy to check that, for $-k,p\in(-\pi,0)$, the second equation also holds:
\begin{equation}
\begin{aligned}
\braket{u|{p^n}^+}&=\frac{1}{\sqrt{2\pi}}\Psi_{un}(e^{ip})\\
\braket{u|{k^n}^-}&=\frac{1}{\sqrt{2\pi}}\Psi_{un}(e^{ik}).
\end{aligned}
\end{equation}

The matrices $\Psi$ and $\mathbb{S}$ give a complete description of the single particle scattering states. As an example, let us prove that 
\begin{equation}
\begin{aligned}
\braket{x;m|p^{n+}}&=\dfrac{1}{\sqrt{2\pi}}(\delta_{mn}e^{-ipx}+\mathbb{S}_{mn}e^{ipx}) &\text{ for } x\ge 1 \text{ and } m=1,\cdots,N.
\label{A12}
\end{aligned}
\end{equation}

Applying the full Hamiltonian, which is the adjacency matrix of the full graph, to $\ket{p^{n+}}$ on the boundary vertex $\bra{1;n}$, we have
\begin{equation}
\begin{aligned}
\bra{m}H\ket{p^{n+}}&=\braket{2;m|p^{n+}}+\bra{m}H_G\frac{1}{\sqrt{2\pi}}\begin{bmatrix}
    z^{-1}\mathds{1}+z\mathbb{S}(z) \\
    (z+z^{-1}-D)^{-1}B(z^{-1}\mathds{1}+z\mathbb{S}(z))
\end{bmatrix}\ket{n}.
\end{aligned}
\end{equation}

Since $\ket{p^{n+}}$ is an eigenstate with energy $2\cos{p}=z+z^{-1}$, and using $H_G=\begin{bmatrix}
    A & B^\dagger  \\
    B & D
\end{bmatrix}$, we have
\begin{equation}
\begin{aligned}
\braket{2;m|p^{n+}}&=(z+z^{-1})\braket{m|p^{n+}}-(A(z^{-1}\mathds{1}-z\mathbb{S}(z))+B^\dagger\Psi)\\
&=(z+z^{-1})(z^{-1}\mathds{1}-z\mathbb{S})_{mn}-(A(z^{-1}\mathds{1}-z\mathbb{S}(z))+B^\dagger(z+z^{-1}-D)^{-1}B(z^{-1}\mathds{1}+z\mathbb{S}(z)))_{mn}\\
&=[((z+z^{-1})\mathds{1}-A-B^\dagger(z+z^{-1}-D)^{-1}B)(z^{-1}\mathds{1}+z\mathbb{S}(z))]_{mn}\\
&=(z^{-2}Q(z)+Q(z)\mathbb{S}(z)+\mathds{1}+z^2\mathbb{S}(z))_{mn}.
\end{aligned}
\end{equation}

Since $\mathbb{S}(z)=-Q(z^{-1})Q(z)^{-1}$ and $[Q(z^{-1}),Q(z)]=0$, we can simplify:
\begin{equation}
\begin{aligned}
\braket{2;m|p^{n+}}=(z^{-2}\mathds{1}+z^2\mathbb{S})_{mn},
\end{aligned}
\end{equation}
which corresponds to Eq.\eqref{A12} for $x=2$. All the other amplitudes are now uniquely determined by the Schrodinger equation on each rail, which has the general solution $Ae^{-ipx}+Be^{ipx}$; this concludes the proof of Eq.\eqref{A12} for all $x\geq 1$.

In the next Section we describe the bound states of this Hamiltonian. 

\section{Single-particle bound states on general graphs}
\label{apx:B}
We now look for normalizable solutions to the Schr{\"o}dinger equation. There are two kinds of such states; we will start describing the confined bound states, which are the states with zero amplitude on the rails. In other words, they have finite support and therefore are automatically normalizable.

Since they only have nonzero amplitude on the inner vertices, they satisfy the eigenvalue equation on $G$:
\begin{equation}
\begin{aligned}
\begin{bmatrix}
    A & B^\dagger  \\
    B & D
\end{bmatrix}\begin{bmatrix}
    0 \\
    \vec{\beta_c} 
\end{bmatrix}=\lambda_c\begin{bmatrix}
    0 \\
    \vec{\beta_c} 
\end{bmatrix}=:\lambda_c\ket{\psi_c},
\end{aligned}
\end{equation}
where $\vec{\beta_c}$ is a column vector containing the $M$ amplitudes of the confined bound state $\ket{\psi_c}$ on $G^0$. The confined bound states are then characterized by $D\vec{\beta_c}=\lambda_c\vec{\beta_c}$ and $B^\dagger\vec{\beta_c}=0$. Let us denote this subspace by $\mathcal{C}$.

Let us now look for other bound states in the orthogonal complement $\mathcal{C}^\perp$.  We have seen that, on the infinite line, the possible solutions are of the form $\psi(x)=z^x$, $x\in\mathbb{Z}$, and no solution is normalizable.  On the other hand, since we are now considering semi-infinite lines, exponential decreasing solutions are admissible. These states have been called unconfined, or evanescent, bound states in the literature. So, we are looking for states of the form
\begin{equation}
\begin{aligned}
\braket{x;j|\phi}=\alpha_j z^{x-1},
\label{otheramp}
\end{aligned}
\end{equation}
on each rail $j\in{1,\cdots,N}$, with $|z|<1$; the authors of \cite{levinsonthm2} show that the only possibility is $z\in(-1,1)$. Writing $\vec{\alpha}$ for the first $N$ amplitudes and $\vec{\beta}$ for the other $M$ amplitudes, the Schr{\"o}dinger equation for the amplitudes on the graph is written in matrix form as
\begin{equation}
\begin{aligned}
\begin{bmatrix}
    A & B^\dagger \\
    B & D
\end{bmatrix}\begin{bmatrix}
    \vec{\alpha}  \\
    \vec{\beta}
\end{bmatrix}+z\begin{bmatrix}
    \vec{\alpha}  \\
    0
\end{bmatrix}=(z+z^{-1})\begin{bmatrix}
    \vec{\alpha}  \\
    \vec{\beta}
\end{bmatrix}.
\label{QEP0}
\end{aligned}
\end{equation}

It is convenient to rewrite this equation as a quadratic eigenvalue problem \footnote{See \cite{QEP} for a review on QEP's.}:
\begin{equation}
\begin{aligned}
\gamma(z)\begin{bmatrix}
    \vec{\alpha}  \\
    \vec{\beta}
\end{bmatrix}=0,
\label{QEP}
\end{aligned}
\end{equation}
where $\gamma(z)$ is the operator
\begin{equation}
\begin{aligned}
\gamma(z)=\begin{bmatrix}
    zA-\mathds{1} & zB^\dagger  \\
    zB & zD-(z^2+1)\mathds{1}
\end{bmatrix}=-z^2 P_M + z H_G -\mathds{1},
\label{Gamma}
\end{aligned}
\end{equation}
and $P_M$ is the matrix for the projector on the $M$ inner vertices. The operator $\gamma(z)$ is called a matrix polynomial; in the ordinary eigenvalue problem, we are given a matrix $A$ and we look for a vector $v$ and a value $z\in\mathbb{C}$ such that $(A-z\mathds{1})v=0$. In the generalized eigenvalue problem we have a matrix polynomial $B(z)$ and we define the eigenvectors as solutions to $B(z)v=0$ for some specified $z\in\mathbb{C}$. Since it is not a very common problem some comments are in order; let us consider a matrix polynomial of degree 2 and size $n\times n$:

First, as in the ordinary eigenvalue problem, the possible eigenvalues of $B(z)=B_0+zB_1 +z^2 B_2$ are solutions to $\det{B(z)}=0$, which in this case is a polynomial with degree at most $2n$. In general, if $B_2$ is not invertible, there will be strictly less than $2n$ eigenvalues and there will be \textit{eigenvectors at infinity}: these are defined as eigenvectors of the matrix polynomial $z^2B(z^{-1})$ at $z=0$.

Secondly, since there may be more than $n$ eigenvectors, the QEP admits linearly \textit{dependent} eigenvectors. Another different feature is that one eigenvector can have two different eigenvalues.

Another analogy is when there is not enough eigenvectors: in the ordinary problem, $A$ might not be diagonalizable, but we can still write it in the Jordan normal form, by finding its generalized eigenvectors. The same situation applies to the QEP; we will see examples of these features shortly.

Returning to the discussion of bound states, the restriction of evanescent bound states to the graph are then defined as eigenvectors of $\gamma(z)$ with eigenvalue $z\in (-1,1)$, that are orthogonal to $\mathcal{C}$, with the other amplitudes defined by \eqref{otheramp}.

On a side-note, on the orthogonal complement to $\mathcal{C}$, there may be ``almost'' bound states with eigenvalue $z=\pm 1$, called half-bound states in the literature. They are \textit{not} normalizable and do not correspond to any scattering state for any momentum. They will be a part of the structure of $\gamma(z)$, but will not play the role of a bound state in further discussions.

Let us go back to the confined bound states; $\ket{\psi_c}$ satisfies \eqref{QEP0} for any $z\in\mathbb{C}$ such that $z+z^{-1}=\lambda_c$. In other words, $\gamma(z)$ has two eigenvalues\footnote{If $z=\pm1$, there is actually one eigenvalue with algebraic multiplicity $2$ associated with only one eigenvalue; this will be discussed in the next section.}, inverse to each other, for each confined bound state. Note that $\lambda_c$ can be any real number, which means those eigenvalues are real or are on the unit circle $S^1$. For $z_c\in S^1$, those bound states fall exactly in the continuum band of scattering states, with energies between $-2$ and $2$, and that is what renders the matrix $Q(z)$ not invertible for $z=z_c$. We have characterized the eigenvalues of $\gamma(z)$ related to bound states: denoting $(n_{ev})$ $n_c$ the dimension of the subspace of (un)confined bound states, there are $n_{ev}$ eigenvalues $z\in(-1,1)$, there are $2n_c$ eigenvalues in $\mathbb{R}\cup S^1$, counting with multiplicities, in addition to possible $\pm1$ eigenvalues associated with $n_h$ half-bound states. Let us choose an ordering for the eigenvalues and the bound states; given a list of eigenvalues $\{z_i\}_{i\in\mathcal{I}}$ of $\gamma(z)$ with corresponding eigenvectors $\ket{x_i}$ satisfying the properties described above, we define define four index sets $\mathcal{I}_{c},\mathcal{I}_{c}^{\pm1},\mathcal{I}_{ev},\mathcal{I}_{hb}$ associated respectively with:

Confined bound states with $z_i\neq\pm1$:
\begin{equation}
\begin{aligned}
\mathcal{I}_{c}&=\{i\in\mathcal{I}\,|\, z_i\in\mathbb{R}\cup S^1 \setminus \{\pm1\},  \bra{x_i}(\mathds{1}-P_M)\ket{x_i}=0\};
\end{aligned}
\end{equation}

Confined bound states with $z_i=\pm1$:
\begin{equation}
\begin{aligned}
\mathcal{I}_{c}^{\pm1}&=\{i\in\mathcal{I}\,|\,z_i=\pm1, \bra{x_i}(\mathds{1}-P_M)\ket{x_i}=0\};
\end{aligned}
\end{equation}

Evanescent bound states:
\begin{equation}
\begin{aligned}
\mathcal{I}_{ev}&=\{i\in\mathcal{I}\,|\,z_i\in(-1,1), \bra{x_i}(\mathds{1}-P_M)\ket{x_i}\neq0\};
\end{aligned}
\end{equation}

Half-bound states:
\begin{equation}
\begin{aligned}
\mathcal{I}_{hb}&=\{i\in\mathcal{I}\,|\,z_i=\pm1, \bra{x_i}(\mathds{1}-P_M)\ket{x_i}\neq0\}.
\end{aligned}
\end{equation}

We say that $\{\ket{x_i}\,|i\in\mathcal{I}_c\cup\mathcal{I}_{c}^{\pm1}\cup\mathcal{I}_{ev}\}$ are \textit{physical} states. Note that for each confined bound state $\ket{x_i}$ with $i\in\mathcal{I}_{c}$, there is a different index $j\in\mathcal{I}_{c}$ associated with the eigenvalue $z_j=z_i^{-1}$ of the same state.

In the next section we analyze the structure of the operator $\gamma(z)$ in more detail.

\section{Resolvent form of the gamma matrix}
\label{apx:C}
When we study the two-particle scattering on general graphs, we will need more results relating scattering states and $\gamma(z)$, defined in Eq.\ \eqref{Gamma}; in particular, we will be interested in the inverse of the matrix $\gamma (z)$, known as the resolvent form of the matrix polynomial. First we note that $\gamma$ acts separately on the subspace of confined states and its orthogonal complement:
\begin{equation}
\begin{aligned}
\gamma(z)\ket{x_i}=(-z^2P_M+zH_G-\mathds{1})\ket{x_i}=(-z^2+(z_i+z_i^{-1}) z+1)\ket{x_i},
\end{aligned}
\end{equation}
for $i\in\mathcal{I}_c\cup\mathcal{I}_c^{\pm1}$.

In other words, $\ket{x_i}$ are eigenvectors in the usual sense of $\gamma(z)$. Defining the projector on the confined subspace $P_\mathcal{C}:=\sum_{i\in\mathcal{I}_c\cup\mathcal{I}_c^{\pm1}} \ket{x_i}\bra{x_i}$, we have a direct sum structure:
\begin{equation}
\begin{aligned}
\gamma(z)&=\sum_{i\in\mathcal{I}_c\cup\mathcal{I}_c^{\pm1}} (-z^2+(z_i+z_i^{-1}) z+1)\ket{x_i}\bra{x_i}\oplus(\mathds{1}-P_\mathcal{C})\gamma(z)(\mathds{1}-P_\mathcal{C})
\\
\gamma(z)^{-1}&=-\sum_{i\in\mathcal{I}_c\cup\mathcal{I}_c^{\pm1}} \frac{\ket{x_i}\bra{x_i}}{(z-z_i)(z-z_i^{-1})}\oplus\gamma_{nc}(z)^{-1},
\label{gammainvdirectsum}
\end{aligned}
\end{equation}
where $\gamma_{nc}(z)=(\mathds{1}-P_\mathcal{C})\gamma(z)(\mathds{1}-P_\mathcal{C})$ is the gamma matrix restricted to the orthogonal complement of $\mathcal{C}$; in the following discussion, every matrix equation should be considered on the orthogonal complement of $\mathcal{C}$. The inverse of $\gamma_{nc}$ can be obtained as follows (ref. \cite{matrixpolynomials} theorem 7.7):
\begin{equation}
\begin{aligned}
\gamma_{nc}(z)^{-1}&=\begin{bmatrix}
    X_F & X_\infty
\end{bmatrix}\begin{bmatrix}
    z\mathds{1}-J_F & 0 \\
    0 & zJ_{\infty}-\mathds{1}
\end{bmatrix}^{-1}\begin{bmatrix}
    Y_F \\
    Y_\infty
\end{bmatrix},
\end{aligned}
\label{gammainv0}
\end{equation}
where the columns $\ket{x_i}$ of $X_F$ are the generalized eigenvectors associated with finite eigenvalues $z_i$ of $\gamma_{nc}(z)$, satisfying $\gamma_{nc}(z_i)\ket{x_i}=0$, and $J_F$ is its corresponding Jordan matrix\footnote{We shall see shortly that the ordering we chose in the last section can be included as a subset of this list.}. This list may include eigenvectors of $\gamma_{nc}(z)$ that are not associated with any bound state; we will refer to them as non-physical eigenvectors. Furthermore,  $J_\infty$ is the zero eigenvalue Jordan matrix for the eigenvectors at infinity, namely, the columns of $X_\infty$ are the generalized eigenvectors at zero of $M(z):=z^2 \gamma_{nc}(z^{-1})$. The matrices $Y_F,Y_\infty$ are given by:
\begin{equation}
\begin{aligned}
\begin{bmatrix}
    Y_F \\
    Y_\infty
\end{bmatrix}&=
\begin{bmatrix}
    \mathds{1} & 0 \\
    0 & J_\infty
\end{bmatrix}
\begin{bmatrix}
    X_F & X_\infty \\
    -P_m X_F J_F & X_\infty J_\infty-HX_\infty
\end{bmatrix}^{-1}
\begin{bmatrix}
    0 \\
    \mathds{1}
\end{bmatrix}\\
&=:\begin{bmatrix}\bra{y_1} \\ \cdots \\ \bra{y_{2N}}\end{bmatrix}.
\end{aligned}
\end{equation}

Let us characterize the generalized eigenvectors at infinity. We will compute them explicitly when the graph has exactly two connecting rails on vertices $1$ and $2$. Note that $M(z)=-z^2\mathds{1}+zH_G-P_M$; we have
\begin{equation}
\begin{aligned}
M(0)&=\begin{bmatrix}
    0_{2\times 2} & 0 \\
    0 & -\mathds{1}_{M\times M}
\end{bmatrix}=-P_M,
\end{aligned}
\end{equation}
so, independently of $H_G$, we have two eigenvectors at infinity, corresponding to the subspace related to the first two vertices. A $k$-chain of generalized eigenvectors for the eigenvalue $z_0$ is defined as a simultaneous solution to the $k$ equations   
\begin{equation}
\begin{aligned}
\sum_{p=0}^i \dfrac{1}{p!}M^{(p)}(z_0)\ket{\tilde{x}_{i-p}}=0, \, i=0,\cdots,k-1,
\end{aligned}
\end{equation}
where $M^{(p)}$ denotes the $p$-th derivative of $M$ with respect to $z$. For $i=0$, we have the usual eigenvector equation $M(0)\ket{\tilde{x}_0}=0$. Let us begin with one such eigenvector:
\begin{equation}
\begin{aligned}
\ket{\tilde{x}_0}=\begin{bmatrix}\alpha_0 \\ \beta_0 \\ \vdots \\ 0\end{bmatrix}=\alpha_0\ket{1}+\beta_0\ket{2}
\end{aligned}
\end{equation}
For $i\ge 1$, we have:
\begin{equation}
\begin{aligned}
H_G\ket{\tilde{x}_0}&-P_M\ket{\tilde{x}_1}=0,\\
-\ket{\tilde{x}_0}+H_G\ket{\tilde{x}_1}&-P_M\ket{\tilde{x}_2}=0,\\
-\ket{\tilde{x}_1}+H_G\ket{\tilde{x}_2}&-P_M\ket{\tilde{x}_3}=0,\\
&\,\,\,\vdots\\
-\ket{\tilde{x}_{k-3}}+H_G\ket{\tilde{x}_{k-2}}&-P_M\ket{\tilde{x}_{k-1}}=0.
\label{geneig}
\end{aligned}
\end{equation}

Applying $\mathds{1}-P_M$ to the first equation, we have the condition for the existence of $\ket{\tilde{x}_1}$:
\begin{equation}
\begin{aligned}
0=(\mathds{1}-P_M)H\ket{\tilde{x}_0},
\end{aligned}
\end{equation}
which is satisfied if, and only if there are no edges between vertices $1$ and $2$. In this case, we have
\begin{equation}
\begin{aligned}
\ket{\tilde{x}_1}=\alpha_1\ket{1}+\beta_1\ket{2}+\alpha_0 H_G\ket{1}+\beta_0 H_G\ket{2},
\end{aligned}
\end{equation}
for some $\alpha_1,\beta_1\in\mathbb{C}$. Applying $\mathds{1}-P_M$ to the second equation in \eqref{geneig}, we have a condition for the existence of $\ket{\tilde{x}_2}$:
\begin{equation}
\begin{aligned}
-(\mathds{1}-P_M)\ket{\tilde{x}_0}+(\mathds{1}-P_M)H_G\ket{\tilde{x}_1}&=0\\
-\alpha_0\ket{1}-\beta_0\ket{2}+\alpha_0 d_1\ket{1}+\alpha_0 p_{12}\ket{1}+\beta_0 d_2\ket{2}+\beta_0 p_{12}\ket{2}&=0
\end{aligned}
\end{equation}
\begin{equation}
\begin{aligned}
\Longrightarrow \begin{cases}(d_1-1)\alpha_0+p_{12}\beta_0=0\\
p_{12}\alpha_0+(d_2-1)\beta_0=0 \end{cases},
\end{aligned}
\end{equation}
where $d_1$ and $d_2$ are the inner degrees of vertices $1$ and $2$ respectively, i.e., without accounting for edges outside of $G$, and $p_{12}$ is the number of paths of length $2$ between vertices $1$ and $2$. This system has no non-trivial solution if and only if 
\begin{equation}
\begin{aligned}
(d_1-1)(d_2-1)-p^2_{12}\neq0.
\label{chaincondition}
\end{aligned}
\end{equation}
We will restrict ourselves to graphs that do not have an edge between vertices $1$ and $2$ and satisfy this condition, with the only exception being the linear graph, which has already been considered in \cite{luna}. Consequently both Jordan chains at infinity will have length exactly $2$ and the Jordan matrix corresponding to those vectors is
\begin{equation}
\begin{aligned}
J_\infty&=\begin{bmatrix}
    0 & 1 & 0 & 0 \\
    0 & 0 & 0 & 0 \\
    0 & 0 & 0 & 1 \\
    0 & 0 & 0 & 0
\end{bmatrix}.
\end{aligned}
\end{equation}
We have $J_\infty^2=0$, which implies $(zJ_\infty-\mathds{1})^{-1}=-(\mathds{1}+zJ_\infty)$. It follows from the lemmas in \cite{levinsonthm2} that all Jordan chains associated with evanescent and half bound states have length one\footnote{This is not the case for confined bound states; this is the reason we treated them separately.}; in this case we can separate even further and simplify:
\begin{equation}
\begin{aligned}
\gamma_{nc}(z)^{-1}&=\begin{bmatrix}
    X_{\leq 1} & X_{>1} & X_\infty
\end{bmatrix}\begin{bmatrix}
    diag\dfrac{1}{z-z_i} & 0 & 0 \\ 0 & (z\mathds{1}-J)^{-1} & 0 \\
    0 & 0 & -(\mathds{1}+zJ_{\infty})
\end{bmatrix}\begin{bmatrix}
    Y_{\leq 1} \\
    Y_{>1} \\
    Y_\infty
\end{bmatrix}\\
&=\sum_{i\in\mathcal{I}_{ev}\cup\mathcal{I}_{hb}}\dfrac{1}{z-z_i}\ket{x_i}\bra{y_i}+X_{>1}(z\mathds{1}-J)^{-1}Y_{>1} - X_{\infty}Y_{\infty},
\end{aligned}
\label{gammainv1}
\end{equation}
where $z_i$ and $\ket{x_i}$ are chosen as in the last section, the columns of $X_{\leq1}$ are the vectors $\ket{x_i}$ related to evanescent and half bound states and the columns of $X_{>1}$ are the generalized eigenvectors with eigenvalues $|\lambda|>1$, excluding possible confined states, with the corresponding matrices $Y_{\leq1}$ and $Y_{>1}$ respectively. 
Going back to equation \eqref{gammainvdirectsum}, we have the full expression
\begin{equation}
\begin{aligned}
\gamma(z)^{-1}=&-\sum_{i\in\mathcal{I}_c\cup\mathcal{I}_c^{\pm1}} \frac{\ket{x_i}\bra{x_i}}{(z-z_i)(z-z_i^{-1})}\\
&+ \sum_{i\in\mathcal{I}_{ev}\cup\mathcal{I}_{hb}}\dfrac{1}{z-z_i}\ket{x_i}\bra{y_i}+X_{>1}(z\mathds{1}-J)^{-1}Y_{>1} - X_{\infty}Y_{\infty}.
\end{aligned}
\label{invgamma00}
\end{equation}

Let $\Lambda$ be the multiset --- as each eigenvalue may appear more than once --- of non-physical eigenvalues, $n_\lambda$ the partial multiplicities --- each eigenvalue may have different length Jordan chains --- of the eigenvalue $\lambda$, i.e, sizes of the respective Jordan block $J_\lambda=\lambda\mathds{1}+N_\lambda$ and $X_\lambda$ the respective Jordan chain with the matching $Y_\lambda$. Then we can write
\begin{equation}
\begin{aligned}
X_{>1}(z\mathds{1}-J)^{-1}Y_{>1}&=\sum_{\lambda\in\Lambda } X_\lambda (z\mathds{1}-J_\lambda)^{-1}Y_\lambda\\
&=\sum_{\lambda\in\Lambda}\sum_{ n=1}^{n_\lambda} \frac{1}{(z-\lambda)^n}X_\lambda N_\lambda^{n-1} Y_\lambda
\end{aligned}
\end{equation}

Since $\gamma(0)^{-1}=-\mathds{1}$, the term involving eigenvectors at infinity can be written in terms of the other eigenvectors:
\begin{equation}
\begin{aligned}
-X_{\infty}Y_{\infty}=-\mathds{1}+\sum_{i\in\mathcal{I}_c\cup\mathcal{I}_c^{\pm1}} \ket{x_i}\bra{x_i}-\sum_{i\in\mathcal{I}_{ev}\cup\mathcal{I}_{hb}}\frac{1}{-z_i}\ket{x_i}\bra{y_i}-\sum \frac{1}{(-\lambda)^n}X_\lambda N_\lambda^{n-1} Y_\lambda,
\end{aligned}
\end{equation}
and we can rewrite equation \eqref{invgamma00} as:
\begin{equation}
\begin{aligned}
\gamma(z)^{-1}=&-\mathds{1}+\sum_{i\in\mathcal{I}_c\cup\mathcal{I}_c^{\pm1}} \left(1-\frac{1}{(z-z_i)(z-z_i^{-1})}\right)\ket{x_i}\bra{x_i}+\sum_{i\in\mathcal{I}_{ev}\cup\mathcal{I}_{hb}}\dfrac{z}{z_i(z-z_i)}\ket{x_i}\bra{y_i}\\
&+\sum_{\lambda\in\Lambda}\sum_{ n=1}^{n_\lambda} \frac{\lambda^n-(\lambda-z)^n}{\lambda^n(z-\lambda)^n}X_\lambda N_\lambda^{n-1} Y_\lambda.
\end{aligned}
\label{gammainv3}
\end{equation}

The term associated with the confined states can be further simplified, depending on whether $z_i=\pm1$ or $z_i\neq\pm1$. For each $z_i=\pm1$, we have
\begin{equation}
\begin{aligned}
\left(1-\frac{1}{(z-z_i)^2}\right)\ket{x_i}\bra{x_i}=\frac{z(z-2z_i)}{(z-z_i)^2}\ket{x_i}\bra{x_i};
\end{aligned}
\end{equation}
this means that each confined state with $z_i=\pm1$ is associated with \textit{one} eigenvalue with algebraic multiplicity $2$; and for $z_i\neq\pm1$ we have
\begin{equation}
\begin{aligned}
\left(1-\frac{1}{(z-z_i)(z-z_i^{-1})}\right)\ket{x_i}\bra{x_i}=-\frac{z}{z_i(z-z_i)}\frac{\ket{x_i}\bra{x_i}}{z_i-z_i^{-1}}-\frac{z}{z_i^{-1}(z-z_i^{-1})}\frac{\ket{x_i}\bra{x_i}}{z_i^{-1}-z_i};
\end{aligned}
\end{equation}
note that this is very similar to the terms for evanescent bound states in Eq.\ \eqref{gammainv3}, one for each of the two eigenvalues associated with $\ket{x_i}$. Defining $\bra{y_i}:=\frac{\bra{x_i}}{z_i-z_i^{-1}}$ for each $i\in\mathcal{I}_c$, we can write a more uniform expression:
\begin{equation}
\begin{aligned}
\gamma(z)^{-1}=&-\mathds{1}+\sum_{i\in\mathcal{I}_c^{\pm1}} \frac{z(z-2z_i)}{(z-z_i)^2}\ket{x_i}\bra{x_i}+\sum_{i\in\mathcal{I}_{ev}\cup\mathcal{I}_{hb}\cup\mathcal{I}_c}\dfrac{z}{z_i(z-z_i)}\ket{x_i}\bra{y_i}\\
&+\sum_{\lambda\in\Lambda}\sum_{ n=1}^{n_\lambda} \frac{\lambda^n-(\lambda-z)^n}{\lambda^n(z-\lambda)^n}X_\lambda N_\lambda^{n-1} Y_\lambda.
\end{aligned}
\label{invgammafinal}
\end{equation}

This expression is the resolvent form for $\gamma(z)^{-1}$; as an application, let us show that $\ket{y_i}$ is proportional to $\ket{x_i}$, as is the case for the confined bound states. The authors of \cite{levinsonthm2} show that, for $x\in\mathbb{R}$, there exist functions $e_i$ and an orthonormal basis $\ket{v_i(x)}$ satisfying
\begin{equation}
\begin{aligned}
\gamma_{nc}(x)^{-1}=\sum_{j=1}^{M+N-n_c}\frac{\ket{v_j(x)}\bra{v_j(x)}}{e_j(x)},
\end{aligned}
\end{equation}
where $\ket{v_j(x)}$ are eigenvectors of $\gamma_{nc}(x)$ in the usual sense: $\gamma(x)\ket{v_j(x)}=e_j(x)\ket{v_j(x)}$. In addition to this expression, every zero $x_0$ of $e_j(x)$ corresponds to either a half-bound state or an evanescent bound state. In the latter case, it holds that
\begin{equation}
\begin{aligned}
\text{Res}_{x=x_0}\frac{\ket{v_j(x)}\bra{v_j(x)}}{e_j(x)}=-\frac{\ket{\phi_b}\bra{\phi_b}}{x_0-x_0^{-1}},
\end{aligned}
\end{equation}
for some evanescent bound state $\ket{\phi_b}$.


Let $x_0\in(-1,1)$. The discussion above implies that
\begin{equation}
\begin{aligned}
\text{Res}_{x=x_0}\gamma_{nc}(x)^{-1}=-\sum_{i\in\mathcal{I}_{ev}:z_i=x_0}\frac{\ket{x_i}\bra{x_i}}{x_0-x_0^{-1}}.
\end{aligned}
\label{andrewresidue}
\end{equation}
On the other hand, using the resolvent form, we have
\begin{equation}
\begin{aligned}
\text{Res}_{x=x_0}\gamma_{nc}(z)^{-1}=\sum_{i\in\mathcal{I}_{ev}:z_i=x_0}\ket{x_i}\bra{y_i}.
\end{aligned}
\end{equation}
Furthermore, even though $\ket{x_i}$ is only defined on $G$, since the bound states are orthogonal to each other in the whole space, we have
\begin{equation}
\begin{aligned}
(z_i-z_i^{-1})\ket{x_i}\bra{y_i}=-\ket{x_i}\bra{x_i},
\end{aligned}
\end{equation}
implying that $\bra{y_i}=-\dfrac{\bra{x_i}}{z_i-z_i^{-1}}$. Now we have the possibility to write an expression for \eqref{invgammafinal} that does not explicitly depend on the vectors $\bra{y_i}$, which themselves are related to the columns of $X_F$ and $X_{\infty}$; in other words, the residues of $\gamma(z)^{-1}$ on physical eigenvalues depend explicitly only on the physical states $\ket{x_i}$.

\section{Graph examples}

\label{apx:D}
In this Section, we define some graphs which we use to exemplify new phenomena in sections \ref{sec:4B} and \ref{sec:4C}. Since the motivation is developing gadgets to transform the state of freely moving particles on rails, we will consider two-particle interactions with finite support, i.e., that act on a finite number of position states $\ket{uv}$; without loss of generality, we consider that the particles will only interact on the inner vertices of the graph, $G^0$. This setting encompasses all possible interactions with finite support. 

The first example we will study is the linear graph with $M$ inner vertices, \texttt{Line(M)}, shown in Fig.\ \ref{fig:Line5}:

\begin{figure}[h]
    \centering    \includegraphics[width=0.7\textwidth]{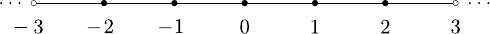}
    \caption[\texttt{Line(5)}]{Depiction of \texttt{Line(5)}; the particles only interact on the central five vertices.}
    \label{fig:Line5}
\end{figure}

Since the linear graph with two rails attached is the same as a infinite line lattice, we label the vertices as integers and consider a symmetric region of interaction. This graph is the only exception to the discussion of the last sections; it does not have any bound states, and both Jordan chains at infinity have length greater than $2$. This will not be a problem, since we will not need to use the resolvent form of $\gamma(z)^{-1}$ to find the S-matrix for the line.

\begin{figure}[h]
    \centering    \includegraphics[width=0.5\textwidth]{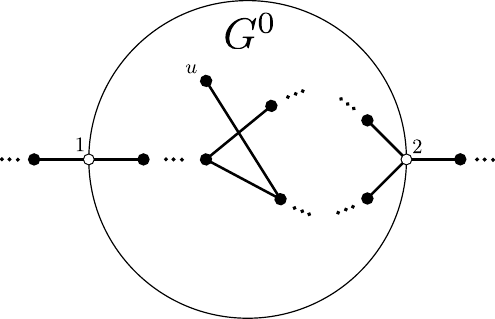}
    \caption[Graph with two rails]{A generic graph with two rails; $u$ is a typical inner vertex in $G^0$, vertices $1$ and $2$ have inner degree $d_1=1$ and $d_2=2$ respectively. We do not consider graphs with edges between $1$ and $2$. The white dots represent vertices where we attach a rail.}
    \label{fig:generic graph 2 rails}
\end{figure}

Every other graph example is similar to the one shown in Fig.\ \ref{fig:generic graph 2 rails}; we consider graphs with only two rails, connecting to different vertices $1$ and $2$, with no edge between those vertices.

\begin{figure}[htpb]
    \centering    \includegraphics[width=\textwidth]{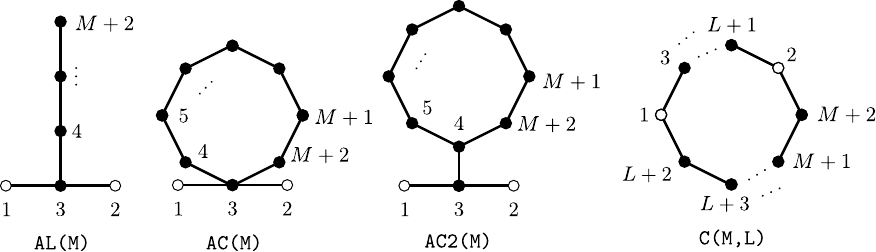}
    \caption[Appended subgraphs]{From left to right: \texttt{AL}(M) is the appended linear graph, \texttt{AC}(M) is the appended cycle graph, \texttt{AC2}(M) is the second family for appended cycle graphs and \texttt{C(M,L)} is a cycle with $M+2$ vertices in total, with rails positioned such that there are $L$ edges in the shortest path between $1$ and $2$..}
    \label{fig:append examples}
\end{figure}

Let us consider three families of graphs which only append a subgraph to the line, shown in Fig.\ \ref{fig:append examples}: the first two are the appended linear graph \texttt{AL(M)} and the appended cycle graph \texttt{AC(M)}. The third one is similar to the appended cycle, with an extra edge between vertex $3$ and the proper cycle. We name it \texttt{AC2(M)}; note that it has a cycle of size $M-1$. All those graphs have no edge between $1$ and $2$ and satisfy $d_1=d_2=1$ and $p_{12}=1$, implying $(d_1-1)(d_2-1)\neq p^2_{12}$. This means that both Jordan chains at infinity have length $2$, and the discussion from last section applies to these graphs. The last we consider are the cycle graphs; we parametrized the cycle by the number of interaction vertices $M$ and the length $L$ of the shortest path between vertices $1$ and $2$. To avoid double counting, $L$ must be at most $\lfloor{\frac{M+2}{2}}\rfloor$. For values $3\leq L\leq\lfloor{\frac{M+2}{2}}\rfloor$ the graph always satisfies $(d_1-1)(d_2-1)\neq p^2_{12}$, since $d_1=d_2=2$ and $p_{12}=0$; the graph \texttt{C(2,2)} also satisfies this condition.
\begin{table}[htpb]
    \centering
    \begin{tabular}{ |p{3.5cm}||p{1cm}|p{1cm}|p{1cm}|  } 
 \hline
 Graph& $n_{ev}$ & $n_c$ & $n_h$\\
 \hline
 \texttt{Line(M)} $M\geq0$ & $0$ & $0$ &  $2$\\
 \texttt{AL(M)} $M\geq2$  &  $2$    & $0$ &   $0$\\
 \texttt{AC(M)} $M\geq3$  &  $2$ & $\lceil\frac{M}{2}\rceil$ & $0$\\
 \texttt{AC2(M+1)} $M\geq3$   & $2$ & $\lceil\frac{M}{2}\rceil $ &  $0$\\
 \texttt{C(2,2)} & $2$ & $1$ &  $0$ \\
 \texttt{C(4,3)} & $2$ & $2$ &  $2$ \\
 \texttt{C(2L-2,L)} $L\geq3$   & $4$ & $L-1$ &  0\\
 \hline
\end{tabular}
    \caption{Number of evanescent $(n_{ev})$, confined $(n_{c})$ and half bound $(n_{hb})$ states. The formulas depending on $M$ has been checked up to graphs with $20$ vertices.}
    \label{tab:bound}
\end{table}

In Table \ref{tab:bound} we count the number of bound states of such graphs; the values have been checked for graphs with up to $20$ vertices.

\section{Two-particle S-matrix computation}
\label{apx:E}

We want to compute the two-particle S-matrix:
\begin{equation}
\begin{aligned}
&S_{\zeta_1\zeta_2;\xi_1\xi_2}=\braket{\zeta_1\zeta_2|\xi_1\xi_2}-2\pi i\delta(E_{\zeta_1\zeta_2}-E_{\xi_1\xi_2})\bra{\zeta_1\zeta_2}V\ket{{\xi_1\xi_2}^+},
\end{aligned}
\end{equation}

For simplicity, let us assume that the particles can only interact when they are on the same site, with the same interaction energy, namely, let $V=U\sum_{u\in G_0}\ket{uu}\bra{uu}$ for some $U\in\mathbb{R}$. Then, to compute the S-matrix, we only require the amplitudes when particles are on the same site, namely, terms of the type $\braket{uu|\xi_1\xi_2}$. We can write in general, from the definition of $V$:
\begin{equation}
\begin{aligned}
\bra{\zeta_1\zeta_2}V\ket{{\xi_1\xi_2}^+}=bU\sum_{u\in G^0} \braket{\zeta_1|u}\braket{\zeta_2|u}\braket{uu|{\xi_1\xi_2}^+},
\label{kVp}
\end{aligned}
\end{equation}
where $b = 1$ for distinguishable particles, $b = \sqrt{2}$ for bosons and $b = 0$ for fermions.

The terms $\braket{uu|{\xi_1\xi_2}^+}$ can be found by taking the inner product of the corresponding Lippmann-Schwinger equation with $\bra{uu}$ for $u\in G^0$:
\begin{equation}
\begin{aligned}
\braket{uu|{\xi_1\xi_2}^+}&=b\braket{u|\xi_1}\braket{u|\xi_2}+U\sum_{v\in G^0}J_{uv} \braket{vv|{\xi_1\xi_2}^+},
\label{lipp}
\end{aligned}
\end{equation}
where
\begin{equation}
\begin{aligned}
J_{uv}=\int_{-\pi}^0\int_{-\pi}^0&\sum_{m_1, m_2\in \partial G}\frac{\braket{u|{k_1^{m_1}}^+}\braket{{k_1^{m_1}}^+|v}\braket{u|{k_2^{m_2}}^+}\braket{{k_2^{m_2}}^+|v}}{E-2\cos{k_1}-2\cos{k_2}+ i\epsilon}dk_1 dk_2  \\
 +2\int_{-\pi}^0&\sum_{\substack{m\in\partial G\\ i\in\mathcal{I}_{bound}}}\frac{\braket{u|{k^{m}}^+}\braket{{k^{m}}^+|v}\braket{u|x_i}\braket{x_i|v}}{E-2\cos{k}-E_i+ i\epsilon}dk\\
+&\sum_{i,j\in\mathcal{I}_{bound}}\frac{\braket{u|x_{i}}\braket{x_{i}|v}\braket{u|x_{j}}\braket{x_{j}|v}}{E-E_i-E_j+i\epsilon},
\label{Juv}
\end{aligned}
\end{equation}
where $\mathcal{I}_{bound}$ corresponds to all bound states, both confined and evanescent, without double counting of the confined states, $E_i:=z_i+z_i^{-1}$ is the energy of the bound state $\ket{x_i}$ and we have used the resolution of the identity $\mathds{1}=\int_{-\pi}^0 \sum_{m\in \partial G}\ket{{k^m}^+}\bra{{k^m}^+}dk + \sum_{i\in\mathcal{I}_{bound}} \ket{x_i}\bra{x_i}$, valid on $G$.



\begin{lemma}
Let $u,v\in G^0$. Then the following equations hold (for the definitions of 
$\Psi$ and $\gamma$ refer to equations \eqref{Psi} and \eqref{Gamma}):
\begin{equation}
\begin{aligned}
\sum_{m\in\partial G} \braket{u|{k^{m}}^+}\braket{{k^{m}}^+|v}=\frac{1}{2\pi}\bra{u}\Psi\Psi^\dagger\ket{v}=\frac{1}{2\pi}\bra{u}\left[(z^2-1)\gamma(z)^{-1}+(z^{-2}-1)\gamma(z^{-1})^{-1}\right]\ket{v}.
\label{PsiPsidagger}
\end{aligned}
\end{equation}
\end{lemma}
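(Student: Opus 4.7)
The first equality is essentially notational. By the single-particle formulas from Appendix~\ref{apx:A}, for $u\in G^0$ and $m\in\partial G$ we have $\braket{u|{k^{m}}^+}=\frac{1}{\sqrt{2\pi}}\Psi_{um}(z)$ with $z=e^{ik}$. Because $A,B,D$ are real, $\Psi(z)$ has real coefficients and therefore $\overline{\Psi_{vm}(z)}=\Psi_{vm}(z^{-1})$ on the unit circle, so summing $\braket{u|{k^m}^+}\braket{{k^m}^+|v}$ over $m$ produces exactly $\frac{1}{2\pi}[\Psi(z)\Psi(z)^\dagger]_{uv}$.

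For the second equality I would first put $\Psi$ into a compact form. Setting $R(z):=((z+z^{-1})\mathds{1}-D)^{-1}$ (so that $R(z^{-1})=R(z)$), a one-line expansion gives $z^{-1}Q(z)-zQ(z^{-1})=(z^{-1}-z)\mathds{1}$, which together with $\mathbb{S}(z)=-Q(z^{-1})Q(z)^{-1}$ yields $z^{-1}\mathds{1}+z\mathbb{S}(z)=(z^{-1}-z)Q(z)^{-1}$ and hence
\[
\Psi(z)=(z^{-1}-z)\,R(z)\,B\,Q(z)^{-1}.
\]
Using symmetry of $R$ and $Q$, the fact that $B^\dagger=B^T$, and $R(z^{-1})=R(z)$, this gives
\[
\Psi(z)\Psi(z)^\dagger=-(z-z^{-1})^2\, R(z)\,B\, Q(z)^{-1}Q(z^{-1})^{-1}\, B^\dagger\, R(z).
\]

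Next I would compute the $G^0\times G^0$ block of $\gamma(z)^{-1}$. By Schur complement this block equals $\bigl[zD-(z^2+1)\mathds{1}-z^2B(zA-\mathds{1})^{-1}B^\dagger\bigr]^{-1}$. Rewriting the leading term as $-zR(z)^{-1}$ and applying the Woodbury identity, the inner matrix that appears is $zA-\mathds{1}+zB^\dagger R(z)B=-Q(z)$, so the block collapses to
\[
[\gamma(z)^{-1}]_{G^0 G^0}=-z^{-1}R(z)-R(z)\,B\,Q(z)^{-1}\,B^\dagger\, R(z).
\]
Recognising that the inner Woodbury matrix reproduces exactly the \emph{same} $Q(z)$ that already governs single-particle scattering is the crux; I expect the main obstacle to be the sign- and $z$-factor bookkeeping that makes this match happen, but once it does everything else is algebra.

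Substituting into the claimed RHS, the coefficient of the bare $R(z)$ term is $-(z^2-1)z^{-1}-(z^{-2}-1)z=0$, so the statement reduces to
\[
(z^2-1)\,Q(z)^{-1}+(z^{-2}-1)\,Q(z^{-1})^{-1}=(z-z^{-1})^2\, Q(z)^{-1}Q(z^{-1})^{-1}.
\]
Writing $Q(z)=\mathds{1}-zM$ with $M:=A+B^\dagger R(z)B$ and noting $R(z^{-1})=R(z)$, we have $Q(z^{-1})=\mathds{1}-z^{-1}M$ with the \emph{same} $M$, so $Q(z)$ and $Q(z^{-1})$ commute. Multiplying through by $Q(z)Q(z^{-1})$ reduces the claim to $(z^2-1)Q(z^{-1})+(z^{-2}-1)Q(z)=(z-z^{-1})^2\mathds{1}$, which is immediate: the $M$-contributions cancel and the scalar part is $z^2+z^{-2}-2=(z-z^{-1})^2$.
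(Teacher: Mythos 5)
Your proof is correct, but it takes a genuinely different route from the paper's. The paper imports the block identity $-\gamma(z)^{-1}\gamma(z^{-1})=\bigl[\begin{smallmatrix}-\mathbb{S}(z) & 0\\ z^{-1}\Psi(z) & -z^{-2}\end{smallmatrix}\bigr]$ from the single-particle literature, deduces $\bra{u}\left(\Psi\Psi^\dagger+\mathds{1}\right)\ket{v}=\bra{u}\gamma(z)^{-1}\gamma(z^{-1})\gamma(z)\gamma(z^{-1})^{-1}\ket{v}$, and then massages this using the two relations $\gamma(z^{-1})=z^{-2}\gamma(z)+(z^{-2}-1)P_N$ and $\gamma(z)^{-1}P_N=z^{-2}-z^{-1}\gamma(z)^{-1}(H-z-z^{-1})$, combining two equivalent expressions to cancel the unwanted terms. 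You instead work entirely at the level of explicit block formulas: $\Psi(z)=(z^{-1}-z)R(z)BQ(z)^{-1}$, a Schur-complement/Woodbury computation of the $G^0\times G^0$ block of $\gamma(z)^{-1}$ (whose inner matrix is verified to be the same $Q(z)$ from single-particle scattering), and a final reduction to the scalar-like identity $(z^2-1)Q(z^{-1})+(z^{-2}-1)Q(z)=(z-z^{-1})^2\mathds{1}$, which holds because $Q(z)$ and $Q(z^{-1})$ are affine in the same matrix $M=A+B^\dagger R(z)B$ and hence commute. I checked the individual steps (the Woodbury inner matrix $C^{-1}+VX^{-1}U=z^{-2}Q(z)$, the vanishing of the bare $R(z)$ coefficient, the commutation argument) and they are all sound. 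Your approach is more elementary and self-contained --- it does not rely on the quoted identity relating $\gamma$ to $\mathbb{S}$ and $\Psi$ --- at the cost of being confined to the $G^0\times G^0$ block from the outset (which is all the lemma requires) and of assuming generic invertibility of $zA-\mathds{1}$ and of the Schur complement at intermediate stages; since both sides are rational in $z$, the identity extends by continuity, but it would be worth saying so explicitly. The paper's version manipulates the full matrices $\gamma(z)^{\pm1}$ and is arguably slicker once the block identity is granted.
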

\begin{proof}
The first equation follows directly from the definition of $\Psi$. We can verify by direct computation \cite{levinsonthm2} a more compact relation between $\gamma(z)$ and the scattering states:
\begin{equation}
\begin{aligned}
-\gamma(z)^{-1}\gamma(z^{-1})&=\begin{bmatrix}
    -\mathbb{S}(z) & 0 \\
    z^{-1}\Psi(z) & -z^{-2}
\end{bmatrix}.
\end{aligned}
\label{S&Psi}
\end{equation}

Then we can write
$$\bra{u}\left(\Psi\Psi^\dagger+\mathds{1}\right)\ket{v}=\bra{u}\gamma(z)^{-1}\gamma(z^{-1})\gamma(z)\gamma(z^{-1})^{-1}\ket{v}.$$
Note that $\gamma(z^{-1})=z^{-2}\gamma(z)+\left(z^{-2}-1\right)P_N$, where $P_N=\mathds{1}-P_M$, so
$$\bra{u}\Psi\Psi^\dagger\ket{v}=(z^2-1)(z^{-2}-1)\bra{u}\gamma(z)^{-1}P_N\gamma(z^{-1})^{-1}\ket{v}$$
Also note that, from the definition $\gamma(z)=-z^2 (\mathds{1}-P_N)+ z H_G -\mathds{1}$, we have $\gamma(z)^{-1}P_N=z^{-2}-z^{-1}\gamma(z)^{-1}(H-z-z^{-1})$. Using this expression and its hermitian conjugate, we can write the previous equation in two different ways:
\begin{equation}
\begin{aligned}
\bra{u}\Psi\Psi^\dagger\ket{v}&=(z^2-1)(z^{-2}-1)\bra{u}\left(z^{-2}-z^{-1}\gamma(z)^{-1}(H-z-z^{-1})\right)\gamma(z^{-1})^{-1}\ket{v}:=A\\
&=(z^2-1)(z^{-2}-1)\bra{u}\gamma(z)^{-1}\left(z^2-z(H-z-z^{-1})\gamma(z^{-1})^{-1}\right)\ket{v}:=B
\end{aligned}
\end{equation}
Since $A=B$, we can write
\begin{equation}
\begin{aligned}
(z^2-1)\bra{u}\Psi\Psi^\dagger\ket{v}=z^2A-B=(z^2-1)(z^{-2}-1)\bra{u}(\gamma(z^{-1})^{-1}-z^2\gamma(z)^{-1})\ket{v},
\end{aligned}
\end{equation}
from which follows directly Eq.\ \eqref{PsiPsidagger}.
\end{proof}
Define $P:=\mathds{1}-P_\mathcal{C}$ the projector on the orthogonal complement of the confined subspace. Note that $P_\mathcal{C}\Psi=0$. It follows from the fact that the confined bound states are orthogonal to the restriction of the scattering states on the graph.

So, we can compute:
\begin{equation}
\begin{aligned}
\frac{1}{2}\oint \frac{\bra{u}\Psi\Psi^\dagger\ket{v}}{E-E'+i\epsilon-z-z^{-1}}\frac{dz}{2\pi iz}=&\frac{1}{2}\oint \frac{\bra{u}P\Psi\Psi^\dagger\ket{v}}{E-E'+i\epsilon-z-z^{-1}}\frac{dz}{2\pi iz}\\
=&\oint \frac{\bra{u}P(z^2-1)\gamma(z)^{-1}\ket{v}}{E-E'+i\epsilon-z-z^{-1}}\frac{dz}{2\pi i z}\\
=-\omega^-_{E'}\bra{u}&P\gamma(\omega^-_{E'})^{-1}\ket{v}+\sum_{i\in\mathcal{I}:z_i\in(-1,1)}z_i \frac{\braket{u|x_i}\braket{x_i|v}}{(z_i-\omega^-_{E'})(z_i-\omega^+_{E'})}\\
=-\omega^-_{E'}\bra{u}&P\gamma(\omega^-_{E'})^{-1}\ket{v}-\sum_{i\in\mathcal{I}_{ev}} \frac{\braket{u|x_i}\braket{x_i|v}}{E-E'-E_i+i\epsilon}
\end{aligned}
\label{intpsipsi}
\end{equation}
where $E'$ stands for either $2\cos{k_2}$ in the double integral or $E_i$ in the single integral , $$\omega^{\pm}_{E'}=\frac{1}{2}\left(E-E'+i\epsilon\pm\sqrt{E-E'+i\epsilon-2}\sqrt{E-E'+i\epsilon+2}\right),$$ and the branch is chosen such that $\omega^-_{E'}$ has smaller absolute value; this choice is exactly the principal branch for the square roots. Note that $\omega^+_{E'}=(\omega^{-}_{E'})^{-1}$.

Let us apply Eq.\ \eqref{intpsipsi} to Eq.\ \eqref{Juv}: changing the variable to $z=e^{ik}$ in the single integral, we get
\begin{equation}
\begin{aligned}
-2\sum_{i\in\mathcal{I}_{bound}}\omega^-_{E_i}\bra{u}P\gamma (\omega^-_{E_i})^{-1}\ket{v}\braket{u|x_i}\braket{x_i|v}-2\sum_{\substack{i\in\mathcal{I}_{bound}\\j\in\mathcal{I}_{ev}}}\frac{\braket{u|x_i}\braket{x_i|v}\braket{u|x_j}\braket{x_j|v}}{E-E_i-E_j+ i\epsilon};
\end{aligned}
\end{equation}
analogously, applying Eq.\ \eqref{intpsipsi} twice to the double integral, we get
\begin{equation}
\begin{aligned}
&-\oint \omega^-_{z+z^{-1}}\bra{u}P\gamma(\omega^-_{z+z^{-1}})^{-1}\ket{v}\bra{u}(z^2-1)P\gamma(z)^{-1}\ket{v}\frac{dz}{2\pi i z}\\
&+\sum_{i\in\mathcal{I}_{ev}}\omega^-_{E_i}\bra{u}P\gamma (\omega^-_{E_i})^{-1}\ket{v}\braket{u|x_i}\braket{x_i|v}+\sum_{i,j\in\mathcal{I}_{ev}}\frac{\braket{u|x_i}\braket{x_i|v}\braket{u|x_j}\braket{x_j|v}}{E-E_i-E_j+ i\epsilon}
\end{aligned}
\end{equation}

Finally, putting all terms of \eqref{Juv} together, we have
\begin{equation}
\begin{aligned}
J_{uv}=&-\oint \omega^-_{z+z^{-1}}\bra{u}P\gamma(\omega^-_{z+z^{-1}})^{-1}\ket{v}\bra{u}(z^2-1)P\gamma(z)^{-1}\ket{v}\frac{dz}{2\pi iz}
\\
&-\sum_{i\in\mathcal{I}_{ev}\cup\mathcal{I}_c\cup\mathcal{I}_c^{\pm1}} \omega^-_{E_i}\bra{u}P\gamma (\omega^-_{E_i})^{-1}\ket{v}\braket{u|x_i}\braket{x_i|v}\\
&+\sum_{i,j\in\mathcal{I}_c\cup\mathcal{I}_c^{\pm1}}\frac{\braket{u|x_i}\braket{x_i|v}\braket{u|x_j}\braket{x_j|v}}{E-E_{i}-E_{j}+i\epsilon},
\label{J}
\end{aligned}
\end{equation}
thus we have reduced the work of computing $J$ to computing bound states and the gamma matrix, as in \ref{apx:B}; for some graphs, the inverse of $\gamma(z)$ may be easily computed directly, but for a systematic method, refer to Appendix \ref{apx:C}.

It is possible to show that the residues of the integrand at the poles inside the unit circle cancel out; so the integral can be computed either on top of the unit circle or on top of the branch cut of $\omega^-_{z+z^{-1}}$, independently of poles that may exist inside the circle.

Going back to the linear system of Eq.\ \eqref{lipp}, it can be written as $Tx=c$ where $x_v:=\braket{vv|{\xi_1\xi_2}^+}$, $c_u:=b\braket{u|\zeta_1}\braket{u|\zeta_2}$ and $T_{uv}:=\delta_{uv}-UJ_{uv}$.

Looking back at Eq.\ \eqref{kVp}, we can write
\begin{equation}
\begin{aligned}
\bra{\zeta_1\zeta_2}V\ket{{\xi_1\xi_2}^+}&=b U\sum_{u,v\in G^0}\braket{\zeta_1|u}\braket{\zeta_2|u}T^{-1}_{uv}\braket{v|\xi_1}\braket{v|\xi_2}\\
&=b \sum_{u,v\in G^0}\braket{\zeta_1|u}\braket{\zeta_2|u}\left(\frac{1}{U}\mathds{1}-J\right)^{-1}_{uv}\braket{v|\xi_1}\braket{v|\xi_2},
\label{E13}
\end{aligned}
\end{equation}
and finally the S-matrix is
\begin{equation}
\begin{aligned}
S_{\zeta_1\zeta_2;\xi_1\xi_2}=\braket{\zeta_1\zeta_2|\xi_1\xi_2}
-2\pi i \delta_E b^2\sum_{u,v\in G^0}\braket{\zeta_1|u}\braket{\zeta_2|u}\left(\frac{1}{U}\mathds{1}-J\right)^{-1}_{uv}\braket{v|\xi_1}\braket{v|\xi_2},
\label{E14}
\end{aligned}
\end{equation}
where $\delta_E=\delta(E_{\zeta_1\zeta_2}-E_{\xi_1\xi_2})$.

If the particles interact on specified pairs of vertices, say, for $(u,v)\in H\subseteq G_0\times G_0$ we can write the potential as $V=\displaystyle\sum_{(u_1,u_2),(v_1,v_2)\in H} v_{u_1u_2;v_1v_2}\ket{u_1u_2}\bra{v_1v_2}$; then the same calculation above remains valid, mostly with the same structure; however, all matrices are $|H|\times|H|$, instead of $M\times M$. Note that $|H|\leq M(M+1)$, so we have at most a quadratic overhead if we want to consider more general potentials. Then Eq.\eqref{E14} becomes
\begin{equation}
\begin{aligned}
S_{\zeta_1\zeta_2;\xi_1\xi_2}=\braket{\zeta_1\zeta_2|\xi_1\xi_2}
-2\pi i \delta_E\sum_{\substack{(u_1,u_2)\in H\\(v_1,v_2)\in H}}\braket{\zeta_1\zeta_2|u_1u_2}\left[V(\mathds{1}-JV)^{-1}\right]_{u_1u_2;v_1v_2}\braket{v_1v_2|\xi_1\xi_2},
\end{aligned}
\end{equation}
where the new $J$ is a $|H|\times|H|$ matrix:
\begin{equation}
\begin{aligned}
J_{u_1u_2;v_1v_2}=&-\frac{1}{2}\oint \omega^-_{z+z^{-1}}\bra{u_1}P\gamma(\omega^-_{z+z^{-1}})^{-1}\ket{v_1}\bra{u_2}(z^2-1)P\gamma(z)^{-1}\ket{v_2}\frac{dz}{2\pi iz}
\\
&-\frac{1}{2}\oint \omega^-_{z+z^{-1}}\bra{u_2}P\gamma(\omega^-_{z+z^{-1}})^{-1}\ket{v_2}\bra{u_1}(z^2-1)P\gamma(z)^{-1}\ket{v_1}\frac{dz}{2\pi iz}
\\
&-\frac{1}{2}\sum_{i\in\mathcal{I}_{ev}\cup\mathcal{I}_c\cup\mathcal{I}_c^{\pm1}} \omega^-_{E_i}\bra{u_1}P\gamma (\omega^-_{E_i})^{-1}\ket{v_1}\braket{u_2|x_i}\braket{x_i|v_2}\\
&-\frac{1}{2}\sum_{i\in\mathcal{I}_{ev}\cup\mathcal{I}_c\cup\mathcal{I}_c^{\pm1}} \omega^-_{E_i}\bra{u_2}P\gamma (\omega^-_{E_i})^{-1}\ket{v_2}\braket{u_1|x_i}\braket{x_i|v_1}\\
&+\sum_{i,j\in\mathcal{I}_c\cup\mathcal{I}_c^{\pm1}}\frac{\braket{u_1|x_i}\braket{x_i|v_1}\braket{u_2|x_j}\braket{x_j|v_2}}{E-E_{i}-E_{j}+i\epsilon}.
\end{aligned}
\end{equation}

\end{document}